\documentclass[12pt]{article}%
\usepackage{amsmath}
\usepackage{mitpress}
\usepackage{amsfonts}
\usepackage{amssymb}
\usepackage{graphicx}%
\setcounter{MaxMatrixCols}{30}
%TCIDATA{OutputFilter=latex2.dll}
%TCIDATA{Version=5.50.0.2953}
%TCIDATA{CSTFile=LaTeX Article (bright).cst}
%TCIDATA{Created=Saturday, July 19, 2008 10:05:12}
%TCIDATA{LastRevised=Saturday, July 20, 2013 23:39:25}
%TCIDATA{<META NAME="GraphicsSave" CONTENT="32">}
%TCIDATA{<META NAME="SaveForMode" CONTENT="1">}
%TCIDATA{BibliographyScheme=Manual}
%TCIDATA{<META NAME="DocumentShell" CONTENT="Articles\SW\A Simple MIT Press Style">}
%TCIDATA{Language=American English}
%BeginMSIPreambleData
\providecommand{\U}[1]{\protect\rule{.1in}{.1in}}
%EndMSIPreambleData
\newtheorem{theorem}{Theorem}

\newtheorem{definition}[theorem]{Definition}
\newtheorem{example}[theorem]{Example}

\newtheorem{notation}[theorem]{Notation}

\newtheorem{remark}[theorem]{Remark}

\newenvironment{proof}[1][Proof]{\textbf{#1.} }{\ \rule{0.5em}{0.5em}}
\newdimen\dummy
\dummy=\oddsidemargin
\addtolength{\dummy}{72pt}
\marginparwidth=.5\dummy
\marginparsep=.1\dummy
\begin{document}

\title{Universal Regular Autonomous Asynchronous Systems: Fixed Points, Equivalencies
and Dynamic Bifurcations}
\author{Serban E. Vlad\\str. Zimbrului, nr. 3, bl. PB68, ap. 11, 410430, Oradea, Romania, E-mail: serban\_e\_vlad@yahoo.com}
\maketitle

\begin{abstract}
The asynchronous systems are the non-deterministic models of the asynchronous
circuits from the digital electrical engineering. In the autonomous version,
such a system is a set of functions $x:\mathbf{R}\rightarrow\{0,1\}^{n}$
called states ($\mathbf{R}$ is the time set). If an asynchronous system is
defined by making use of a so called generator function $\Phi:\{0,1\}^{n}%
\rightarrow\{0,1\}^{n},$ then it is called regular. The property of
universality means the greatest in the sense of the inclusion.

The purpose of the paper is that of defining and of characterizing the fixed
points, the equivalencies and the dynamical bifurcations of the universal
regular autonomous asynchronous systems. We use analogies with the dynamical
systems theory.

\end{abstract}

\section{Preliminaries}

\begin{definition}
We denote by $\mathbf{B}=\{0,1\}$ the \textbf{binary Boole algebra}, endowed
with the discrete topology and with the usual laws.
\end{definition}

\begin{definition}
Let be the Boolean function $\Phi:\mathbf{B}^{n}\rightarrow\mathbf{B}^{n}%
,\Phi=(\Phi_{1},...,\Phi_{n})$ and $\nu\in\mathbf{B}^{n},\nu=(\nu_{1}%
,...,\nu_{n}).$ We define $\Phi^{\nu}:\mathbf{B}^{n}\rightarrow\mathbf{B}^{n}$
by $\forall\mu\in\mathbf{B}^{n},$%
\[
\Phi^{\nu}(\mu)=(\overline{\nu_{1}}\cdot\mu_{1}\oplus\nu_{1}\cdot\Phi_{1}%
(\mu),...,\overline{\nu_{n}}\cdot\mu_{n}\oplus\nu_{n}\cdot\Phi_{n}(\mu)).
\]

\end{definition}

\begin{remark}
$\Phi^{\nu}$ represents the function resulting from $\Phi$ when this one is
not computed, in general, on all the coordinates $\Phi_{i},i=\overline{1,n}:$
if $\nu_{i}=0,$ then $\Phi_{i}$ is not computed, $\Phi_{i}^{\nu}(\mu)=\mu_{i}$
and if $\nu_{i}=1,$ then $\Phi_{i}$ is computed, $\Phi_{i}^{\nu}(\mu)=\Phi
_{i}(\mu).$
\end{remark}

\begin{definition}
Let be the sequence $\alpha^{0},\alpha^{1},...,\alpha^{k},...\in\mathbf{B}%
^{n}.$ The functions $\Phi^{\alpha^{0}\alpha^{1}...\alpha^{k}}:\mathbf{B}%
^{n}\rightarrow\mathbf{B}^{n}$ are defined iteratively by $\forall
k\in\mathbf{N},\forall\mu\in\mathbf{B}^{n},$%
\[
\Phi^{\alpha^{0}\alpha^{1}...\alpha^{k}\alpha^{k+1}}(\mu)=\Phi^{\alpha^{k+1}%
}(\Phi^{\alpha^{0}\alpha^{1}...\alpha^{k}}(\mu)).
\]

\end{definition}

\begin{definition}
The sequence $\alpha^{0},\alpha^{1},...,\alpha^{k},...\in\mathbf{B}^{n}$ is
called \textbf{progressive} if%
\[
\forall i\in\{1,...,n\},\text{ the set }\{k|k\in\mathbf{N},\alpha_{i}%
^{k}=1\}\text{ is infinite.}%
\]
The set of the progressive sequences is denoted by $\Pi_{n}.$
\end{definition}

\begin{remark}
Let be $\mu\in\mathbf{B}^{n}.$ When $\alpha=\alpha^{0},\alpha^{1}%
,...,\alpha^{k},...$ is progressive, each coordinate $\Phi_{i},i=\overline
{1,n}$ is computed infinitely many times in the sequence $\Phi^{\alpha
^{0}\alpha^{1}...\alpha^{k}}(\mu),$ $k\in\mathbf{N}$. This is the meaning of
the progress property, giving the so called 'unbounded delay model' of
computation of the Boolean functions.
\end{remark}

\begin{definition}
The \textbf{initial value}, denoted by $x(-\infty+0)$ or $\underset
{t\rightarrow-\infty}{\lim}x(t)\in\mathbf{B}^{n}$ and the \textbf{final
value}, denoted by $x(\infty-0)$ or $\underset{t\rightarrow\infty}{\lim
}x(t)\in\mathbf{B}^{n}$ of the function $x:\mathbf{R}\rightarrow\mathbf{B}%
^{n}$ are defined by%
\[
\exists t^{\prime}\in\mathbf{R},\forall t<t^{\prime},x(t)=x(-\infty+0),
\]%
\[
\exists t^{\prime}\in\mathbf{R},\forall t>t^{\prime},x(t)=x(\infty-0).
\]

\end{definition}

\begin{definition}
The function $x:\mathbf{R}\rightarrow\mathbf{B}^{n}$ is called
(\textbf{pseudo})\textbf{periodical with the period} $T_{0}>0$ if

a) $\underset{t\rightarrow\infty}{\lim}x(t)$ does not exist and

b) $\exists t^{\prime}\in\mathbf{R},\forall t\geq t^{\prime},x(t)=x(t+T_{0}).$
\end{definition}

\begin{definition}
The \textbf{characteristic function} $\chi_{A}:\mathbf{R}\rightarrow
\mathbf{B}$ of the set $A\subset\mathbf{R}$ is defined in the following way:%
\[
\chi_{A}(t)=\left\{
\begin{array}
[c]{c}%
1,\;if\;t\in A\;\\
0,otherwise
\end{array}
\right.  .
\]

\end{definition}

\begin{notation}
We denote by $Seq$ the set of the real sequences $t_{0}<t_{1}<...<t_{k}<...$
which are unbounded from above.
\end{notation}

\begin{remark}
The sequences $(t_{k})\in Seq$ act as time sets. At this level of generality
of the exposure, a double uncertainty exists in the real time iterative
computations of the function $\Phi:\mathbf{B}^{n}\rightarrow\mathbf{B}^{n}: $
we do not know precisely neither the coordinates $\Phi_{i}$ of $\Phi$ that are
computed, nor when the computation happens. This uncertainty implies the non
determinism of the model and its origin consists in structural fluctuations in
the fabrication process, the variations in ambiental temperature and the power
supply etc.
\end{remark}

\begin{definition}
A \textbf{signal} (or $n-$\textbf{signal}) is a function $x:\mathbf{R}%
\rightarrow\mathbf{B}^{n}$ of the form%
\begin{equation}
x(t)=x(-\infty+0)\cdot\chi_{(-\infty,t_{0})}(t)\oplus x(t_{0})\cdot
\chi_{\lbrack t_{0},t_{1})}(t)\oplus... \label{pre1}%
\end{equation}%
\[
...\oplus x(t_{k})\cdot\chi_{\lbrack t_{k},t_{k+1})}(t)\oplus...
\]
with $(t_{k})\in Seq.$ The set of the signals is denoted by $S^{(n)}.$
\end{definition}

\begin{remark}
The signals $x\in S^{(n)}$ model the electrical signals from the digital
electrical engineering. They have by definition initial values and they avoid
'Dirichlet type' properties (called Zeno properties by the engineers) such as%
\[
\exists t\in\mathbf{R},\forall\varepsilon>0,\exists t^{\prime}\in
(t-\varepsilon,t),\exists t^{\prime\prime}\in(t-\varepsilon,t),x(t^{\prime
})\neq x(t^{\prime\prime}),
\]%
\[
\exists t\in\mathbf{R},\forall\varepsilon>0,\exists t^{\prime}\in
(t,t+\varepsilon),\exists t^{\prime\prime}\in(t,t+\varepsilon),x(t^{\prime
})\neq x(t^{\prime\prime})
\]
because these properties cannot characterize the inertial devices.
\end{remark}

\begin{notation}
We denote by $P^{\ast}$ the set of the non-empty subsets of a set.
\end{notation}

\begin{definition}
The \textbf{autonomous asynchronous systems} are the non-empty sets $X\in
P^{\ast}(S^{(n)}).$
\end{definition}

\begin{example}
\label{Exa13}We give the following simple example that shows how the
autonomous asynchronous systems model the asynchronous circuits. In Figure
\ref{preliminaries1} we have drawn the (logical) gate NOT with the%
%TCIMACRO{\FRAME{ftbpFUX}{1.4451in}{0.7308in}{0pt}{\Qcb{Circuit with the
%logical gate NOT}}{\Qlb{preliminaries1}}{}%
%{\special{ language "Scientific Word";  type "GRAPHIC";
%maintain-aspect-ratio TRUE;  display "USEDEF";  valid_file "F";
%width 1.4451in;  height 0.7308in;  depth 0pt;  original-width 1.8896in;
%original-height 0.9375in;  cropleft "0";  croptop "1";  cropright "1";
%cropbottom "0";  filename 'preliminaries1.eps';file-properties "XNPEU";}}}%
%BeginExpansion
\begin{figure}
[ptb]
\begin{center}
\fbox{\includegraphics[
height=0.7308in,
width=1.4451in
]%
{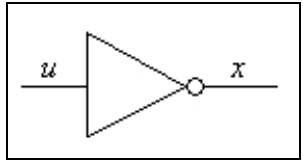}%
}\caption{Circuit with the logical gate NOT}%
\label{preliminaries1}%
\end{center}
\end{figure}
%EndExpansion
input $u\in S^{(1)}$ and the state (the output) $x\in S^{(1)}.$ For
$\lambda\in\mathbf{B}$ and%
\[
u(t)=\lambda,
\]
the state $x$ represents the computation of the negation of $u$ and it is of
the form%
\[
x(t)=\mu\cdot\chi_{(-\infty,t_{0})}(t)\oplus\overline{\lambda}\cdot
\chi_{\lbrack t_{0},t_{1})}(t)\oplus\overline{\lambda}\cdot\chi_{\lbrack
t_{1},t_{2})}(t)\oplus...\oplus\overline{\lambda}\cdot\chi_{\lbrack
t_{k},t_{k+1})}(t)\oplus...
\]%
\[
=\mu\cdot\chi_{(-\infty,t_{0})}(t)\oplus\overline{\lambda}\cdot\chi_{\lbrack
t_{0},\infty)}(t),
\]
where $\mu\in\mathbf{B}$ is the initial value of $x$ and $(t_{k})\in Seq$ is
arbitrary. As we can see, $x$ depends on $t_{0},\mu,\lambda$ only and it is
independent on $t_{1},t_{2},...$

In Figure \ref{preliminaries2},%
%TCIMACRO{\FRAME{ftbpFUX}{1.5031in}{0.9548in}{0pt}{\Qcb{Circuit with feedback
%with the logical gate NOT}}{\Qlb{preliminaries2}}{}%
%{\special{ language "Scientific Word";  type "GRAPHIC";
%maintain-aspect-ratio TRUE;  display "USEDEF";  valid_file "F";
%width 1.5031in;  height 0.9548in;  depth 0pt;  original-width 1.9597in;
%original-height 1.2306in;  cropleft "0";  croptop "1";  cropright "1";
%cropbottom "0";  filename 'preliminaries2.eps';file-properties "XNPEU";}}}%
%BeginExpansion
\begin{figure}
[ptb]
\begin{center}
\fbox{\includegraphics[
height=0.9548in,
width=1.5031in
]%
{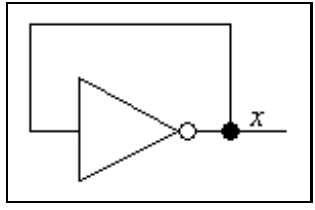}%
}\caption{Circuit with feedback with the logical gate NOT}%
\label{preliminaries2}%
\end{center}
\end{figure}
%EndExpansion
we have%
\[
x(t)=\mu\cdot\chi_{(-\infty,t_{0})}(t)\oplus\overline{\mu}\cdot\chi_{\lbrack
t_{0},t_{1})}(t)\oplus\mu\cdot\chi_{\lbrack t_{1},t_{2})}(t)\oplus...
\]%
\[
\oplus\overline{\mu}\cdot\chi_{\lbrack t_{2k},t_{2k+1})}(t)\oplus\mu\cdot
\chi_{\lbrack t_{2k+1},t_{2k+2})}(t)\oplus...
\]
thus this circuit is modeled by the autonomous asynchronous system%
\[
X=\{\mu\cdot\chi_{(-\infty,t_{0})}(t)\oplus\overline{\mu}\cdot\chi_{\lbrack
t_{0},t_{1})}(t)\oplus\mu\cdot\chi_{\lbrack t_{1},t_{2})}(t)\oplus...
\]%
\[
\oplus\overline{\mu}\cdot\chi_{\lbrack t_{2k},t_{2k+1})}(t)\oplus\mu\cdot
\chi_{\lbrack t_{2k+1},t_{2k+2})}(t)\oplus...|\mu\in\mathbf{B},(t_{k})\in
Seq\}\in P^{\ast}(S^{(1)}).
\]

\end{example}

\begin{definition}
The \textbf{progressive functions} $\rho:\mathbf{R}\rightarrow\mathbf{B}^{n}$
are by definition the functions%
\begin{equation}
\rho(t)=\alpha^{0}\cdot\chi_{\{t_{0}\}}(t)\oplus\alpha^{1}\cdot\chi
_{\{t_{1}\}}(t)\oplus...\oplus\alpha^{k}\cdot\chi_{\{t_{k}\}}(t)\oplus...
\label{pre2}%
\end{equation}
where $(t_{k})\in Seq$ and $\alpha^{0},\alpha^{1},...,\alpha^{k},...\in\Pi
_{n}.$ The set of the progressive functions is denoted by $P_{n}.$
\end{definition}

\begin{definition}
\label{Def20}For $\Phi:\mathbf{B}^{n}\rightarrow\mathbf{B}^{n}$ and $\rho\in
P_{n}$ like at (\ref{pre2}), we define $\Phi^{\rho}:\mathbf{R}\times
\mathbf{B}^{n}\rightarrow\mathbf{B}^{n}$ by $\forall t\in\mathbf{R},\forall
\mu\in\mathbf{B}^{n},$%
\[
\Phi^{\rho}(t,\mu)=\mu\cdot\chi_{(-\infty,t_{0})}(t)\oplus\Phi^{\alpha^{0}%
}(\mu)\cdot\chi_{\lbrack t_{0},t_{1})}(t)\oplus...\oplus\Phi^{\alpha
^{0}...\alpha^{k}}(\mu)\cdot\chi_{\lbrack t_{k},t_{k+1})}(t)\oplus...
\]

\end{definition}

\begin{remark}
The previous equation reminds the iterations of a discrete time real dynamical
system. The time is not exactly discrete in it, but some sort of intermediate
situation occurs between the discrete and the real time; on the other hand the
iterations of $\Phi$ do not happen on all the coordinates (synchronicity), but
on some coordinates only, such that any coordinate $\Phi_{i}$ is computed
infinitely many times, $i=\overline{1,n}$ (asynchronicity) when $t\in
\mathbf{R}$.
\end{remark}

\section{Discrete time}

\begin{notation}
We denote by%
\[
\mathbf{N}_{\_}=\mathbf{N}\cup\{-1\}
\]
the discrete time set.
\end{notation}

\begin{definition}
Let be $\Phi:\mathbf{B}^{n}\rightarrow\mathbf{B}^{n}$ and $\alpha\in\Pi
_{n},\alpha=\alpha^{0},...,\alpha^{k},...$ We define the function
$\widehat{\Phi}^{\alpha}:\mathbf{N}_{\_}\times\mathbf{B}^{n}\rightarrow
\mathbf{B}^{n}$ by $\forall(k,\mu)\in\mathbf{N}_{\_}\times\mathbf{B}^{n}, $%
\[
\widehat{\Phi}^{\alpha}(k,\mu)=\left\{
\begin{array}
[c]{c}%
\mu,k=-1,\\
\Phi^{\alpha^{0}...\alpha^{k}}(\mu),k\geq0
\end{array}
\right.  .
\]

\end{definition}

\begin{notation}
Let us denote%
\[
\widehat{\Pi}_{n}=\{\alpha|\alpha\in\Pi_{n},\forall k\in\mathbf{N},\alpha
^{k}\neq(0,...,0)\}.
\]

\end{notation}

\begin{definition}
The equivalence of $\rho,\rho^{\prime}\in P_{n}$ is defined by: $\exists
(t_{k})\in Seq,\exists(t_{k}^{\prime})\in Seq,\exists\alpha\in\widehat{\Pi
}_{n}$ such that (\ref{pre2}) and%
\[
\rho^{\prime}(t)=\alpha^{0}\cdot\chi_{\{t_{0}^{\prime}\}}(t)\oplus\alpha
^{1}\cdot\chi_{\{t_{1}^{\prime}\}}(t)\oplus...\oplus\alpha^{k}\cdot
\chi_{\{t_{k}^{\prime}\}}(t)\oplus...
\]
are true.
\end{definition}

\begin{definition}
The 'canonical surjection' $s:P_{n}\rightarrow\widehat{\Pi}_{n}$ is by
definition the function $\forall\rho\in P_{n},$%
\[
s(\rho)=\alpha
\]
where $\alpha\in\widehat{\Pi}_{n}$ is the only sequence such that $(t_{k})\in
Seq$ exists, making the equation (\ref{pre2}) true.
\end{definition}

\begin{remark}
The relation between the continuous and the discrete time is the following:
for any $\mu\in\mathbf{B}^{n}$ and any $\rho\in P_{n},$ $\alpha\in\widehat
{\Pi}_{n}$ and $(t_{k})\in Seq$ exist making the equation (\ref{pre2}) true
and we have%
\[
\Phi^{\rho}(t,\mu)=\widehat{\Phi}^{\alpha}(-1,\mu)\cdot\chi_{(-\infty,t_{0}%
)}(t)\oplus\widehat{\Phi}^{\alpha}(0,\mu)\cdot\chi_{\lbrack t_{0},t_{1}%
)}(t)\oplus...
\]%
\[
...\oplus\widehat{\Phi}^{\alpha}(k,\mu)\cdot\chi_{\lbrack t_{k},t_{k+1}%
)}(t)\oplus...
\]
Equivalent progressive functions $\rho,\rho^{\prime}\in P_{n}$ (i.e.
$s(\rho)=s(\rho^{\prime})$) give 'equivalent' functions $\Phi^{\rho}%
(t,\mu),\Phi^{\rho^{\prime}}(t,\mu)$ in the sense that the computations of
$\Phi$ are the same, but the time flow is piecewise faster or slower in the
two situations.
\end{remark}

\section{Regular autonomous asynchronous systems}

\begin{definition}
The \textbf{universal regular autonomous asynchronous system} $\Xi_{\Phi}\in
P^{\ast}(S^{(n)})$ that is generated by the function $\Phi:\mathbf{B}%
^{n}\rightarrow\mathbf{B}^{n}$ is defined by
\[
\Xi_{\Phi}=\{\Phi^{\rho}(\cdot,\mu)|\mu\in\mathbf{B}^{n},\rho\in P_{n}\}.
\]

\end{definition}

\begin{definition}
An autonomous asynchronous system $X\in P^{\ast}(S^{(n)})$ is called
\textbf{regular}, if $\Phi$ exists such that $X\subset\Xi_{\Phi}.$ In this
case $\Phi$ is called the \textbf{generator function\footnote{The terminology
of 'generator function' is also used in \cite{bib4}, page 18 meaning the
vector field of a discrete time dynamical system. In \cite{bib6} the
terminology of 'generator' (function) of a dynamical system is mentioned too.
Moisil called $\Phi$ 'network function' in a non-autonomous, discrete time
context; for Moisil, 'network' means 'system' or 'circuit'.}} of $X$.
\end{definition}

\begin{remark}
In the last two definitions, the attribute 'regular' refers to the existence
of a generator function $\Phi$ and the attribute 'universal' means maximal
relative to the inclusion.

For a regular system, $\Phi$ is not unique in general.
\end{remark}

\begin{example}
For any $\mu^{0}\in\mathbf{B}^{n}$ and $\rho^{\ast}\in P_{n},$ the autonomous
systems $\{\Phi^{\rho^{\ast}}(\cdot,\mu^{0})\},$ $\{\Phi^{\rho}(\cdot,\mu
^{0})|\rho\in P_{n}\},$ $\{\Phi^{\rho^{\ast}}(\cdot,\mu)|\mu\in\mathbf{B}%
^{n}\}$ and $\Xi_{\Phi}$ are regular.

For $\Phi=1_{\mathbf{B}^{n}},$ the system $\Xi_{1_{\mathbf{B}^{n}}}=\{\mu
|\mu\in\mathbf{B}^{n}\}=\mathbf{B}^{n}$ is regular.

Another example of universal regular autonomous asynchronous system is given
by $\Phi=\mu^{0},$ the constant function, for which $\Xi_{\mu^{0}}%
=\{x|x_{i}=\mu_{i}\cdot\chi_{(-\infty,t_{i})}\oplus\mu_{i}^{0}\cdot
\chi_{\lbrack t_{i},\infty)},\mu_{i}\in\mathbf{B},t_{i}\in\mathbf{R}%
,i=\overline{1,n}\}$.
\end{example}

\begin{remark}
These examples suggest several possibilities of defining the systems
$X\subset\Xi_{\Phi}$ which are not universal. For example by putting
appropriate supplementary requests on the functions $\rho,$ one could
rediscover the 'bounded delay model' of computation of the Boolean functions.
\end{remark}

\section{Orbits and state portraits}

\begin{definition}
Let be $\rho\in P_{n}.$ Two things are understood by \textbf{orbit}, or
(\textbf{state}, or \textbf{phase}) \textbf{trajectory} (\cite{bib4}, page 19;
\cite{bib3}, page 3; \cite{bib1}, page 8; \cite{bib2}, page 24; \cite{bib5},
page 2) \textbf{of} $\Xi_{\Phi}$ \textbf{starting at} $\mu\in\mathbf{B}^{n}$:

a) the function $\Phi^{\rho}(\cdot,\mu):\mathbf{R}\rightarrow\mathbf{B}^{n};$

b) the set $Or_{\rho}(\mu)=\{\Phi^{\rho}(t,\mu)|t\in\mathbf{R}\}$ representing
the values of the previous function.

Sometimes (\cite{bib3}, page 4; \cite{bib6}, page 91; \cite{bib2}, page 24;
\cite{bib5}, page 2) the function from a) is called the \textbf{motion} (or
the \textbf{dynamic}) of $\mu$ through $\Phi^{\rho}.$
\end{definition}

\begin{definition}
The equivalent properties
\[
\exists t\in\mathbf{R},\Phi^{\rho}(t,\mu)=\mu^{\prime}%
\]
and
\[
\mu^{\prime}\in Or_{\rho}(\mu)
\]
are called of \textbf{accessibility}; the points $\mu^{\prime}\in Or_{\rho
}(\mu)$ are said to be \textbf{accessible}.
\end{definition}

\begin{remark}
\label{Rem32}The orbits are the curves in $\mathbf{B}^{n}$, parametrized by
$\rho$ and $t.$ On the other hand $\rho\in P_{n},$ $t^{\prime}\in\mathbf{R}$
imply $\rho\cdot\chi_{(t^{\prime},\infty)}\in P_{n}$ and we see the truth of
the implication%
\[
\mu^{\prime}=\Phi^{\rho}(t^{\prime},\mu)\Longrightarrow\forall t\geq
t^{\prime},\Phi^{\rho}(t,\mu)=\Phi^{\rho\cdot\chi_{(t^{\prime},\infty)}}%
(t,\mu^{\prime}).
\]

\end{remark}

\begin{definition}
The \textbf{state} (or the \textbf{phase})\textbf{\ portrait} of $\Xi_{\Phi}$
is the set of its orbits (\cite{bib3}, page 4; \cite{bib6}, page 92;
\cite{bib1}, page 10; \cite{bib5}, page 2).
\end{definition}

\begin{example}
\label{Exa34}The function $\Phi:\mathbf{B}^{2}\rightarrow\mathbf{B}^{2}$ is
defined by the following table%
\[%
\begin{array}
[c]{cc}%
(\mu_{1},\mu_{2}) & \Phi(\mu_{1},\mu_{2})\\
(0,0) & (0,0)\\
(0,1) & (1,0)\\
(1,0) & (1,1)\\
(1,1) & (1,1)
\end{array}
\]
The state portrait of $\Xi_{\Phi}$ is:%
\[
\{(0,1)\cdot\chi_{(-\infty,t_{0})}\oplus(0,0)\cdot\chi_{\lbrack t_{0},\infty
)}|t_{0}\in\mathbf{R}\}\cup
\]%
\[
\cup\{(0,1)\cdot\chi_{(-\infty,t_{0})}\oplus(1,0)\cdot\chi_{\lbrack
t_{0},t_{1})}\oplus(1,1)\cdot\chi_{\lbrack t_{1},\infty)}|t_{0},t_{1}%
\in\mathbf{R},t_{0}<t_{1}\}\cup
\]%
\[
\cup\{(0,1)\cdot\chi_{(-\infty,t_{0})}\oplus(1,1)\cdot\chi_{\lbrack
t_{0},\infty)}|t_{0}\in\mathbf{R}\}\cup
\]%
\[
\cup\{(1,0)\cdot\chi_{(-\infty,t_{0})}\oplus(1,1)\cdot\chi_{\lbrack
t_{0},\infty)}|t_{0}\in\mathbf{R}\}\cup\{(0,0)\}\cup\{(1,1)\}.
\]
This set is drawn in Figure \ref{ph1},%
%TCIMACRO{\FRAME{ftbpFUX}{1.4131in}{1.1052in}{0pt}{\Qcb{The state portrait of
%the system from Example \ref{Exa34}.}}{\Qlb{ph1}}{}%
%{\special{ language "Scientific Word";  type "GRAPHIC";
%maintain-aspect-ratio TRUE;  display "USEDEF";  valid_file "F";
%width 1.4131in;  height 1.1052in;  depth 0pt;  original-width 1.8343in;
%original-height 1.4269in;  cropleft "0";  croptop "1";  cropright "1";
%cropbottom "0";  filename 'phaseportrait.eps';file-properties "XNPEU";}}}%
%BeginExpansion
\begin{figure}
[ptb]
\begin{center}
\fbox{\includegraphics[
height=1.1052in,
width=1.4131in
]%
{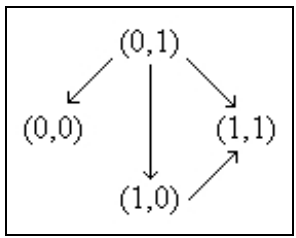}%
}\caption{The state portrait of the system from Example \ref{Exa34}.}%
\label{ph1}%
\end{center}
\end{figure}
%EndExpansion
where the arrows show the increase of time. One might want to put arrows from
$(0,0)$ to itself and from $(1,1)$ to itself.
\end{example}

\section{Nullclins}

\begin{definition}
Let be $\Phi:\mathbf{B}^{n}\rightarrow\mathbf{B}^{n}.$ For any $i\in
\{1,...,n\},$ the \textbf{nullclins} of $\Phi$ are the sets%
\[
NC_{i}=\{\mu|\mu\in\mathbf{B}^{n},\Phi_{i}(\mu)=\mu_{i}\}.
\]
If $\mu\in NC_{i},$ then the coordinate $i$ is said to be \textbf{not
excited}, or \textbf{not enabled}, or \textbf{stable} and if $\mu\in
\mathbf{B}^{n}\setminus NC_{i}$ then it is called \textbf{excited}, or
\textbf{enabled}, or \textbf{unstable}.
\end{definition}

\begin{remark}
Sometimes, instead of indicating $\Phi$ by a table like previously, we can
replace Figure \ref{ph1} by Figure \ref{ph2},%
%TCIMACRO{\FRAME{ftbpFUX}{1.4096in}{1.0966in}{0pt}{\Qcb{The state portrait of
%the system from Example \ref{Exa34}, version}}{\Qlb{ph2}}{}%
%{\special{ language "Scientific Word";  type "GRAPHIC";
%maintain-aspect-ratio TRUE;  display "USEDEF";  valid_file "F";
%width 1.4096in;  height 1.0966in;  depth 0pt;  original-width 1.3742in;
%original-height 1.0629in;  cropleft "0";  croptop "1";  cropright "1";
%cropbottom "0";  filename 'phaseportrait1.eps';file-properties "XNPEU";}}}%
%BeginExpansion
\begin{figure}
[ptb]
\begin{center}
\fbox{\includegraphics[
height=1.0966in,
width=1.4096in
]%
{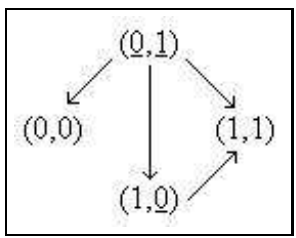}%
}\caption{The state portrait of the system from Example \ref{Exa34}, version}%
\label{ph2}%
\end{center}
\end{figure}
%EndExpansion
where we have underlined the unstable coordinates. For example in Figure
\ref{ph2}, $(\underline{0},\underline{1})$ means that $\Phi(0,1)=(1,0),$
$(1,\underline{0})$ means that $\Phi(1,0)=(1,1)$ etc.

In fact Figure \ref{ph2} results uniquely from Figure \ref{ph1}, one could
know by looking at Figure \ref{ph1} which coordinates should be underlined and
which should be not.
\end{remark}

\section{Fixed points}

\begin{definition}
A point $\mu\in\mathbf{B}^{n}$ that fulfills $\Phi(\mu)=\mu$ is called a
\textbf{fixed point} (an \textbf{equilibrium point}, a \textbf{critical
point}, a \textbf{singular point}) (\cite{bib4}, page 43; \cite{bib3}, page 4;
\cite{bib6}, page 92; \cite{bib1}, page 9; \cite{bib2}, page 24; \cite{bib5},
page 2), shortly an \textbf{equilibrium} of $\Phi.$ A point that is not fixed
is called \textbf{ordinary}.
\end{definition}

\begin{theorem}
\label{The38}The following statements are equivalent for $\mu\in\mathbf{B}%
^{n}:$%
\begin{equation}
\Phi(\mu)=\mu, \label{equ1}%
\end{equation}%
\begin{equation}
\exists\rho\in P_{n},\forall t\in\mathbf{R},\Phi^{\rho}(t,\mu)=\mu,
\label{equ2}%
\end{equation}%
\begin{equation}
\forall\rho\in P_{n},\forall t\in\mathbf{R},\Phi^{\rho}(t,\mu)=\mu,
\label{equ2_}%
\end{equation}%
\begin{equation}
\exists\rho\in P_{n},Or_{\rho}(\mu)=\{\mu\}, \label{equ3}%
\end{equation}%
\begin{equation}
\forall\rho\in P_{n},Or_{\rho}(\mu)=\{\mu\}, \label{equ3_}%
\end{equation}%
\begin{equation}
\mu\in NC_{1}\cap...\cap NC_{n}. \label{equ3__}%
\end{equation}

\end{theorem}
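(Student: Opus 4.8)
The plan is to reduce the six statements to a short chain of implications, treating the two families $\{\eqref{equ2},\eqref{equ3}\}$ and $\{\eqref{equ2_},\eqref{equ3_}\}$ as mere reformulations of one another. First I would dispose of the purely definitional links. Statement \eqref{equ1} asserts $\Phi_i(\mu)=\mu_i$ for every $i$, which is exactly simultaneous membership in all the $NC_i$; hence \eqref{equ1} $\Leftrightarrow$ \eqref{equ3__}. Next, for a fixed $\rho\in P_n$ the image $Or_\rho(\mu)=\{\Phi^\rho(t,\mu)\mid t\in\mathbf{R}\}$ equals $\{\mu\}$ precisely when $\Phi^\rho(t,\mu)=\mu$ for all $t$; attaching an existential quantifier over $\rho$ gives \eqref{equ2} $\Leftrightarrow$ \eqref{equ3}, and a universal one gives \eqref{equ2_} $\Leftrightarrow$ \eqref{equ3_}. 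Finally \eqref{equ2_} $\Rightarrow$ \eqref{equ2} is immediate once one notes $P_n\neq\varnothing$. After these reductions it suffices to prove the two genuine implications \eqref{equ1} $\Rightarrow$ \eqref{equ2_} and \eqref{equ2} $\Rightarrow$ \eqref{equ1}, which together close the loop.

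For \eqref{equ1} $\Rightarrow$ \eqref{equ2_} the computation to carry out is that a fixed point is frozen under every partial update. Assuming $\Phi(\mu)=\mu$, for any $\nu\in\mathbf{B}^n$ the $i$-th coordinate of $\Phi^\nu(\mu)$ is $\overline{\nu_i}\cdot\mu_i\oplus\nu_i\cdot\Phi_i(\mu)=\overline{\nu_i}\cdot\mu_i\oplus\nu_i\cdot\mu_i=\mu_i$, so $\Phi^\nu(\mu)=\mu$ for every $\nu$. An induction on $k$ using the iterative definition of $\Phi^{\alpha^0\ldots\alpha^k}$ then yields $\Phi^{\alpha^0\ldots\alpha^k}(\mu)=\mu$ for all $k$, and reading this off Definition \ref{Def20} gives $\Phi^\rho(t,\mu)=\mu$ for every $t\in\mathbf{R}$ and every $\rho\in P_n$, which is exactly \eqref{equ2_}.

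The substantive step is \eqref{equ2} $\Rightarrow$ \eqref{equ1}, and this is where progressivity must be used. Given some $\rho\in P_n$, written as in \eqref{pre2} with $s(\rho)=\alpha\in\Pi_n$, such that $\Phi^\rho(t,\mu)=\mu$ identically, evaluating on the successive intervals $[t_k,t_{k+1})$ forces $\Phi^{\alpha^0\ldots\alpha^k}(\mu)=\mu$ for every $k$. Combining two consecutive stages through $\Phi^{\alpha^0\ldots\alpha^{k+1}}(\mu)=\Phi^{\alpha^{k+1}}(\Phi^{\alpha^0\ldots\alpha^k}(\mu))$ then collapses to $\Phi^{\alpha^k}(\mu)=\mu$ for all $k$. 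Now fix a coordinate $i$; since $\alpha$ is progressive, there is some $k$ with $\alpha_i^k=1$, and for that $k$ the $i$-th coordinate of $\Phi^{\alpha^k}(\mu)$ is exactly $\Phi_i(\mu)$, whence $\Phi_i(\mu)=\mu_i$. As $i$ was arbitrary this yields $\Phi(\mu)=\mu$, that is \eqref{equ1}.

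I expect the main obstacle to be precisely this last use of the progress property rather than any algebra: the identity $\Phi^{\alpha^k}(\mu)=\mu$ alone only constrains those coordinates with $\alpha_i^k=1$, so without the guarantee that every coordinate is enabled for at least one index $k$ one could not recover all $n$ equations $\Phi_i(\mu)=\mu_i$. The argument therefore genuinely needs $\alpha\in\Pi_n$, and I would make the appeal to Definition \ref{Def20} and to progressivity explicit at exactly that point.
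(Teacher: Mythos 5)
Your proposal is correct and follows essentially the same route as the paper: induction on $k$ shows that a fixed point is frozen under every $\Phi^{\alpha^{0}\ldots\alpha^{k}}$, and for the converse you extract $\Phi^{\alpha^{k}}(\mu)=\mu$ for each $k$ and invoke progressivity to cover every coordinate, which is exactly what the paper does with its sets $I_{k}$ and the observation that $I_{0}\cup\ldots\cup I_{k^{\prime}}=\{1,\ldots,n\}$ for some $k^{\prime}$. The remaining equivalences are definitional in both treatments, so the only (harmless) difference is that you fold the paper's separate implication from (\ref{equ1}) to (\ref{equ2}) into the one toward (\ref{equ2_}).
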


\begin{proof}
(\ref{equ1})$\Longrightarrow$(\ref{equ2}) We take $\rho\in P_{n}$ in the
following way%
\[
\rho(t)=(1,...,1)\cdot\chi_{\{t_{0}\}}(t)\oplus...\oplus(1,...,1)\cdot
\chi_{\{t_{k}\}}(t)\oplus...
\]
with $(t_{k})\in Seq.$ For the sequence%
\[
\forall k\in\mathbf{N},\alpha^{k}=(1,...,1)
\]
from $\Pi_{n}$ we can prove by induction on $k$ that%
\begin{equation}
\forall k\in\mathbf{N},\Phi^{\alpha^{0}...\alpha^{k}}(\mu)=\mu\label{equ6}%
\end{equation}
wherefrom%
\begin{equation}
\Phi^{\rho}(t,\mu)=\mu\cdot\chi_{(-\infty,t_{0})}(t)\oplus\mu\cdot
\chi_{\lbrack t_{0},t_{1})}(t)\oplus...\oplus\mu\cdot\chi_{\lbrack
t_{k},t_{k+1})}(t)\oplus...=\mu\label{equ7}%
\end{equation}

(\ref{equ2})$\Longrightarrow$(\ref{equ1}) From (\ref{equ2}) we have the
existence of $\alpha\in\Pi_{n}$ and $(t_{k})\in Seq$ with the property that
(\ref{equ7}) is true, thus (\ref{equ6}) is true. We denote%
\[
I_{0}=\{i|i\in\{1,...,n\},\alpha_{i}^{0}=1\},
\]%
\[
I_{1}=\{i|i\in\{1,...,n\},\alpha_{i}^{1}=1\},
\]%
\[
...
\]%
\[
I_{k}=\{i|i\in\{1,...,n\},\alpha_{i}^{k}=1\},
\]%
\[
...
\]
and we have from (\ref{equ6}):

$\forall i\in\{1,..,n\},$%
\[
\Phi_{i}^{\alpha^{0}}(\mu)=\left\{
\begin{array}
[c]{c}%
\Phi_{i}(\mu),i\in I_{0}\\
\mu_{i},i\in\{1,...,n\}\setminus I_{0}%
\end{array}
\right.  =\mu_{i};
\]

$\forall i\in\{1,..,n\},\Phi_{i}^{\alpha^{0}\alpha^{1}}(\mu)=\Phi_{i}%
^{\alpha^{1}}(\Phi^{\alpha^{0}}(\mu))=$%
\[
=\Phi_{i}^{\alpha^{1}}(\mu)=\left\{
\begin{array}
[c]{c}%
\Phi_{i}(\mu),i\in I_{1}\\
\mu_{i},i\in\{1,...,n\}\setminus I_{1}%
\end{array}
\right.  =\mu_{i};
\]%
\[
...
\]

$\forall i\in\{1,..,n\},\Phi_{i}^{\alpha^{0}\alpha^{1}...\alpha^{k}}(\mu
)=\Phi_{i}^{\alpha^{k}}(\Phi^{\alpha^{0}...\alpha^{k-1}}(\mu))=$%
\[
=\Phi_{i}^{\alpha^{k}}(\mu)=\left\{
\begin{array}
[c]{c}%
\Phi_{i}(\mu),i\in I_{k}\\
\mu_{i},i\in\{1,...,n\}\setminus I_{k}%
\end{array}
\right.  =\mu_{i};
\]%
\[
...
\]
with the conclusion that%
\[
\forall k\in\mathbf{N},\forall i\in I_{0}\cup I_{1}\cup...\cup I_{k},\Phi
_{i}(\mu)=\mu_{i}.
\]
Some $k^{\prime}\in\mathbf{N}$ exists with the property that%
\[
I_{0}\cup I_{1}\cup...\cup I_{k^{\prime}}=\{1,...,n\},
\]
thus (\ref{equ1}) is true.

(\ref{equ1})$\Longrightarrow$(\ref{equ2_}) Let be%
\begin{equation}
\rho(t)=\alpha^{0}\cdot\chi_{\{t_{0}\}}(t)\oplus...\oplus\alpha^{k}\cdot
\chi_{\{t_{k}\}}(t)\oplus... \label{equ8}%
\end{equation}
with $\alpha^{0},...,\alpha^{k},...\in\Pi_{n}$ and $(t_{k})\in Seq$ arbitrary.
It is proved by induction on $k$ the validity of (\ref{equ6}) and this implies
the truth of (\ref{equ7}).

(\ref{equ2_})$\Longrightarrow$(\ref{equ1}) This is true because (\ref{equ2_}%
)$\Longrightarrow$(\ref{equ2}) and (\ref{equ2})$\Longrightarrow$(\ref{equ1})
are true.

(\ref{equ2})$\Longleftrightarrow$(\ref{equ3}) and (\ref{equ2_}%
)$\Longleftrightarrow$(\ref{equ3_}) are obvious.

(\ref{equ1})$\Longleftrightarrow$(\ref{equ3__}) $\Phi(\mu)=\mu
\Longleftrightarrow\Phi_{1}(\mu)=\mu_{1}$ and...and $\Phi_{n}(\mu)=\mu
_{n}\Longleftrightarrow\mu\in NC_{1}$ and...and $\mu\in NC_{n}%
\Longleftrightarrow\mu\in NC_{1}\cap...\cap NC_{n}.$
\end{proof}

\begin{definition}
If $\Phi(\mu)=\mu,$ then $\forall\rho\in P_{n},$ the orbit $\Phi^{\rho}%
(t,\mu)=\mu$ is called \textbf{rest position}.
\end{definition}

\section{Fixed points vs. final values of the orbits}

\begin{theorem}
\label{The8}(\cite{bib7}, Theorem 49) The following fixed point property is
true%
\[
\forall\mu\in\mathbf{B}^{n},\forall\mu^{\prime}\in\mathbf{B}^{n},\forall
\rho\in P_{n},\underset{t\rightarrow\infty}{\lim}\Phi^{\rho}(t,\mu
)=\mu^{\prime}\Longrightarrow\Phi(\mu^{\prime})=\mu^{\prime}.
\]

\end{theorem}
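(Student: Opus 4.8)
The plan is to transfer the hypothesis from continuous time to the discrete iterates $\Phi^{\alpha^{0}\cdots\alpha^{k}}(\mu)$ and then to exploit the progressiveness of the underlying sequence coordinate by coordinate. First I would fix $\rho\in P_{n}$ written as in (\ref{pre2}), with $(t_{k})\in Seq$ and progressive sequence $\alpha=(\alpha^{0},\alpha^{1},\ldots)\in\Pi_{n}$, and unpack the assumption $\underset{t\rightarrow\infty}{\lim}\Phi^{\rho}(t,\mu)=\mu'$. By the definition of the final value there is $t'\in\mathbf{R}$ with $\Phi^{\rho}(t,\mu)=\mu'$ for all $t>t'$. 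Since $(t_{k})$ is unbounded from above, I would choose $k_{0}$ with $t_{k_{0}}>t'$ and read off the values of $\Phi^{\rho}(\cdot,\mu)$ on the intervals $[t_{k},t_{k+1})$ from Definition \ref{Def20}; this gives
\[
\Phi^{\alpha^{0}\cdots\alpha^{k}}(\mu)=\mu'\quad\text{for every }k\geq k_{0}.
\]

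Next I would peel off a single iteration. For each $k\geq k_{0}$ both $\Phi^{\alpha^{0}\cdots\alpha^{k}}(\mu)$ and $\Phi^{\alpha^{0}\cdots\alpha^{k+1}}(\mu)$ equal $\mu'$, so the iterative definition $\Phi^{\alpha^{0}\cdots\alpha^{k}\alpha^{k+1}}(\mu)=\Phi^{\alpha^{k+1}}(\Phi^{\alpha^{0}\cdots\alpha^{k}}(\mu))$ yields
\[
\Phi^{\alpha^{j}}(\mu')=\mu'\quad\text{for all }j\geq k_{0}+1.
\]
Recalling that $\Phi^{\nu}$ recomputes exactly the coordinates $i$ with $\nu_{i}=1$ and leaves the others unchanged, the identity $\Phi^{\alpha^{j}}(\mu')=\mu'$ forces $\Phi_{i}(\mu')=\mu'_{i}$ for every $i$ with $\alpha_{i}^{j}=1$.

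Finally I would invoke progressiveness. Since $\alpha\in\Pi_{n}$, for each coordinate $i\in\{1,\ldots,n\}$ the set $\{j\mid\alpha_{i}^{j}=1\}$ is infinite and therefore contains some index $j\geq k_{0}+1$; for that $j$ the previous paragraph gives $\Phi_{i}(\mu')=\mu'_{i}$. As $i$ is arbitrary, $\Phi(\mu')=\mu'$, which is exactly (\ref{equ1}) for $\mu'$. I expect the only genuinely delicate point to be the first step, namely the faithful passage from the continuous-time limit to equality of the iterates $\Phi^{\alpha^{0}\cdots\alpha^{k}}(\mu)=\mu'$ for all large $k$ (and not merely for a single $k$); once this is secured, the one-step reduction and the progressiveness bookkeeping, which guarantees that every coordinate is eventually recomputed and hence pinned to a fixed point value, are routine.
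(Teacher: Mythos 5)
Your proof is correct and rests on the same idea as the paper's: after some time the trajectory is constantly $\mu^{\prime}$, and progressiveness forces every coordinate to be recomputed on that tail, pinning $\Phi_{i}(\mu^{\prime})=\mu_{i}^{\prime}$ for each $i$. The paper packages this more economically by using Remark \ref{Rem32} to restart the trajectory at $\mu^{\prime}$ with the shifted progressive function $\rho\cdot\chi_{(t^{\prime},\infty)}$ and then citing the implication (\ref{equ2})$\Longrightarrow$(\ref{equ1}) of Theorem \ref{The38}, whose proof is exactly the coordinate-by-coordinate bookkeeping you spell out.
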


\begin{proof}
Let $\mu\in\mathbf{B}^{n},\mu^{\prime}\in\mathbf{B}^{n},\rho\in P_{n}$ be
arbitrary and fixed. Some $t^{\prime}\in\mathbf{R}$ exists such that $\forall
t\geq t^{\prime},$%
\[
\mu^{\prime}=\Phi^{\rho}(t,\mu)\overset{\text{Remark \ref{Rem32}}}{=}%
\Phi^{\rho\cdot\chi_{(t^{\prime},\infty)}}(t,\mu^{\prime})
\]
and from Theorem \ref{The38}, (\ref{equ2})$\Longrightarrow$(\ref{equ1}) we
have $\Phi(\mu^{\prime})=\mu^{\prime}.$
\end{proof}

\begin{remark}
Theorem \ref{The8} shows that the final values of the states of a system are
fixed points of $\Phi$.
\end{remark}

\begin{theorem}
\label{The9}(\cite{bib7}, Theorem 50) We have $\forall\mu\in\mathbf{B}%
^{n},\forall\mu^{\prime}\in\mathbf{B}^{n},\forall\rho\in P_{n},$%
\[
(\Phi(\mu^{\prime})=\mu^{\prime}\;and\;\exists t^{\prime}\in\mathbf{R}%
,\Phi^{\rho}(t^{\prime},\mu)=\mu^{\prime})\Longrightarrow\forall t\geq
t^{\prime},\Phi^{\rho}(t,\mu)=\mu^{\prime}.
\]

\end{theorem}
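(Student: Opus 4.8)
The plan is to mirror the structure of the proof of Theorem \ref{The8}, exploiting the same two ingredients: the shift property recorded in Remark \ref{Rem32} and the implication (\ref{equ1})$\Longrightarrow$(\ref{equ2_}) of Theorem \ref{The38}. First I would fix arbitrary $\mu,\mu'\in\mathbf{B}^{n}$ and $\rho\in P_{n}$, and assume the hypothesis, namely $\Phi(\mu')=\mu'$ together with the existence of $t'\in\mathbf{R}$ satisfying $\Phi^{\rho}(t',\mu)=\mu'$. The goal is then to propagate this single equality to all later times.

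The central step is to apply Remark \ref{Rem32} to the equality $\mu'=\Phi^{\rho}(t',\mu)$. Since $\rho\in P_{n}$ and $t'\in\mathbf{R}$ force $\rho\cdot\chi_{(t',\infty)}\in P_{n}$, the remark yields $\Phi^{\rho}(t,\mu)=\Phi^{\rho\cdot\chi_{(t',\infty)}}(t,\mu')$ for every $t\geq t'$. This reduces the problem from tracking the orbit of $\mu$ past $t'$ to tracking the orbit of the fixed point $\mu'$ under the truncated progress $\rho\cdot\chi_{(t',\infty)}$, which is exactly the regime controlled by the fixed-point theorem.

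Next I would invoke that $\mu'$ is a fixed point. Because $\rho\cdot\chi_{(t',\infty)}\in P_{n}$ and $\Phi(\mu')=\mu'$, the implication (\ref{equ1})$\Longrightarrow$(\ref{equ2_}) of Theorem \ref{The38} gives $\Phi^{\rho\cdot\chi_{(t',\infty)}}(t,\mu')=\mu'$ for all $t\in\mathbf{R}$, in particular for all $t\geq t'$. Chaining this identity with the one from the previous step produces $\Phi^{\rho}(t,\mu)=\mu'$ for every $t\geq t'$, which is the desired conclusion; since $\mu,\mu',\rho$ were arbitrary, the universally quantified statement follows.

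There is no genuinely hard step here; the only point requiring care is the verification that $\rho\cdot\chi_{(t',\infty)}$ remains progressive, so that Theorem \ref{The38} is applicable to it — and this is precisely the observation already supplied inside Remark \ref{Rem32}. The result can therefore be read as the companion of Theorem \ref{The8}: the latter says every finite limit of an orbit is a fixed point of $\Phi$, while this statement says that a fixed point, once attained by an orbit at a finite time, is never subsequently left.
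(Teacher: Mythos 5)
Your proposal is correct and follows exactly the paper's own argument: apply Remark \ref{Rem32} to rewrite $\Phi^{\rho}(t,\mu)$ as $\Phi^{\rho\cdot\chi_{(t^{\prime},\infty)}}(t,\mu^{\prime})$ for $t\geq t^{\prime}$, then use the implication (\ref{equ1})$\Longrightarrow$(\ref{equ2_}) of Theorem \ref{The38} on the fixed point $\mu^{\prime}$ to conclude. Nothing is missing.
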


\begin{proof}
For arbitrary $\mu\in\mathbf{B}^{n},\mu^{\prime}\in\mathbf{B}^{n},\rho\in
P_{n}$ we suppose that $\Phi(\mu^{\prime})=\mu^{\prime}\;$and$\;\Phi^{\rho
}(t^{\prime},\mu)=\mu^{\prime}.$ We have $\forall t\geq t^{\prime},$%
\[
\Phi^{\rho}(t,\mu)\overset{\text{Remark \ref{Rem32}}}{=}\Phi^{\rho\cdot
\chi_{(t^{\prime},\infty)}}(t,\mu^{\prime})\overset{\text{Theorem \ref{The38},
(\ref{equ1})}\Longrightarrow\text{(\ref{equ2_})}}{=}\mu^{\prime}.
\]

\end{proof}

\begin{remark}
As resulting from Theorem \ref{The9}, the accessible fixed points are final
values of the states of the systems.

The properties of the fixed points that are expressed by Theorems \ref{The38},
\ref{The8}, \ref{The9} give a better understanding of Example \ref{Exa34}.
\end{remark}

\section{Transitivity}

\begin{definition}
\label{Def108}The system $\Xi_{\Phi}$ (or $\Phi$) is \textbf{transitive}
(\cite{bib4}, page 22; \cite{bib3}, page 3), or \textbf{minimal} (\cite{bib4},
page 23) if one of the following non-equivalent properties holds true:%
\begin{equation}
\forall\mu\in\mathbf{B}^{n},\forall\mu^{\prime}\in\mathbf{B}^{n},\exists
\rho\in P_{n},\exists t\in\mathbf{R},\Phi^{\rho}(t,\mu)=\mu^{\prime},
\label{tt1}%
\end{equation}%
\begin{equation}
\forall\mu\in\mathbf{B}^{n},\forall\mu^{\prime}\in\mathbf{B}^{n},\forall
\rho\in P_{n},\exists t\in\mathbf{R},\Phi^{\rho}(t,\mu)=\mu^{\prime}.
\label{tt2}%
\end{equation}

\end{definition}

\begin{remark}
The property of transitivity may be considered one of surjectivity or one of accessibility.

If $\Phi$ is transitive, then it has no fixed points.
\end{remark}

\begin{example}
The property (\ref{tt1}) of transitivity is exemplified in Figure
\ref{minimality1}
%TCIMACRO{\FRAME{ftbpFUX}{1.3422in}{0.9202in}{0pt}{\Qcb{Transitivity}%
%}{\Qlb{minimality1}}{}{\special{ language "Scientific Word";
%type "GRAPHIC";  maintain-aspect-ratio TRUE;  display "USEDEF";
%valid_file "F";  width 1.3422in;  height 0.9202in;  depth 0pt;
%original-width 1.7513in;  original-height 1.1882in;  cropleft "0";
%croptop "1";  cropright "1";  cropbottom "0";
%filename 'minimality1.eps';file-properties "XNPEU";}}}%
%BeginExpansion
\begin{figure}
[ptb]
\begin{center}
\fbox{\includegraphics[
height=0.9202in,
width=1.3422in
]%
{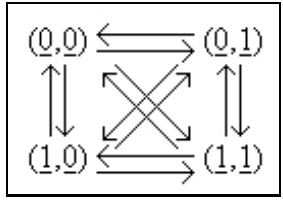}%
}\caption{Transitivity}%
\label{minimality1}%
\end{center}
\end{figure}
%EndExpansion
and the property (\ref{tt2}) of transitivity is exemplified in Figure
\ref{minimality2}.
%TCIMACRO{\FRAME{ftbpFUX}{1.3327in}{0.9115in}{0pt}{\Qcb{Transitivity}%
%}{\Qlb{minimality2}}{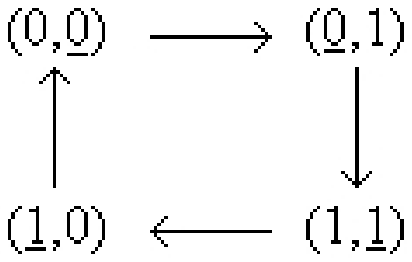}{\special{ language "Scientific Word";
%type "GRAPHIC";  maintain-aspect-ratio TRUE;  display "USEDEF";
%valid_file "F";  width 1.3327in;  height 0.9115in;  depth 0pt;
%original-width 1.7365in;  original-height 1.1761in;  cropleft "0";
%croptop "1";  cropright "1";  cropbottom "0";
%filename 'minimality2.eps';file-properties "XNPEU";}}}%
%BeginExpansion
\begin{figure}
[ptb]
\begin{center}
\fbox{\includegraphics[
height=0.9115in,
width=1.3327in
]%
{}%
}\caption{Transitivity}%
\label{minimality2}%
\end{center}
\end{figure}
%EndExpansion

\end{example}

\section{The equivalence of the dynamical systems}

\begin{notation}
Let $h:\mathbf{B}^{n}\rightarrow\mathbf{B}^{n}$ and $x:\mathbf{R}%
\rightarrow\mathbf{B}^{n}$ be some functions. We denote by $h(x):\mathbf{R}%
\rightarrow\mathbf{B}^{n}$ the function%
\[
\forall t\in\mathbf{R},h(x)(t)=h(x(t)).
\]

\end{notation}

\begin{remark}
If $h:\mathbf{B}^{n}\rightarrow\mathbf{B}^{n}$ and $x\in S^{(n)}$ is expressed
by%
\[
x(t)=x(-\infty+0)\cdot\chi_{(-\infty,t_{0})}(t)\oplus x(t_{0})\cdot
\chi_{\lbrack t_{0},t_{1})}(t)\oplus...\oplus x(t_{k})\cdot\chi_{\lbrack
t_{k},t_{k+1})}(t)\oplus...
\]
then%
\[
h(x)(t)=h(x(-\infty+0))\cdot\chi_{(-\infty,t_{0})}(t)\oplus h(x(t_{0}%
))\cdot\chi_{\lbrack t_{0},t_{1})}(t)\oplus...
\]%
\[
...\oplus h(x(t_{k}))\cdot\chi_{\lbrack t_{k},t_{k+1})}(t)\oplus...
\]

\end{remark}

\begin{notation}
For $h:\mathbf{B}^{n}\rightarrow\mathbf{B}^{n}$ and $\alpha=\alpha
^{0},...,\alpha^{k},...\in\mathbf{B}^{n},$ we denote by $\widehat{h}(\alpha)$
the sequence $h(\alpha^{0}),...,h(\alpha^{k}),...\in\mathbf{B}^{n}.$
\end{notation}

\begin{notation}
Let be $k\geq2$ arbitrary and we denote for $\mu^{1},...,\mu^{k}\in
\mathbf{B}^{n},$%
\[
\mu^{1}\cup...\cup\mu^{k}=(\mu_{1}^{1}\cup...\cup\mu_{1}^{k},...,\mu_{n}%
^{1}\cup...\cup\mu_{n}^{k}).
\]

\end{notation}

\begin{notation}
We denote by $\Omega_{n}$ the set of the functions $h:\mathbf{B}%
^{n}\rightarrow\mathbf{B}^{n}$ that fulfill

i) $h$ is bijective;

ii) $h(0,...,0)=(0,...,0),\;h(1,...,1)=(1,...,1);$

iii) $\forall k\geq2,\forall\mu^{1}\in\mathbf{B}^{n},...,\forall\mu^{k}%
\in\mathbf{B}^{n},$%
\[
\mu^{1}\cup...\cup\mu^{k}=(1,...,1)\Longleftrightarrow h(\mu^{1})\cup...\cup
h(\mu^{k})=(1,...,1).
\]

\end{notation}

\begin{theorem}
\label{The59}a) $\Omega_{n}$ is group relative to the composition $^{\prime
}\circ^{\prime}$ of the functions;

b) $\forall h\in\Omega_{n},\forall\alpha\in\Pi_{n},\widehat{h}(\alpha)\in
\Pi_{n};$

c) $\forall h\in\Omega_{n},\forall\rho\in P_{n},h(\rho)\in P_{n}.$
\end{theorem}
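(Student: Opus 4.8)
The plan is to prove (a) by checking the group axioms inside the symmetric group on $\mathbf{B}^n$, to prove (b) by a short covering argument built on property iii), and then to read off (c) from (b). For (a) the only axioms needing attention are closure and inversion, since associativity is automatic for composition and $1_{\mathbf{B}^n}\in\Omega_n$ is immediate. For closure, if $h,g\in\Omega_n$ then $h\circ g$ is bijective and fixes $(0,...,0),(1,...,1)$, while iii) for $h\circ g$ follows by applying iii) for $g$ and then iii) for $h$ to the chain $\mu^1\cup...\cup\mu^k=(1,...,1)\Leftrightarrow g(\mu^1)\cup...\cup g(\mu^k)=(1,...,1)\Leftrightarrow h(g(\mu^1))\cup...\cup h(g(\mu^k))=(1,...,1)$. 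For inversion I would set $\mu^j=h^{-1}(\nu^j)$ and apply iii) for $h$; since $h^{-1}$ is a bijection fixing the two constant vectors, this transports the covering equivalence through $h^{-1}$ and gives $h^{-1}\in\Omega_n$.

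The heart of the statement is (b), which I would prove by contradiction. Assume $\alpha\in\Pi_n$ but $\widehat{h}(\alpha)\notin\Pi_n$ (take $n\geq2$, the case $n=1$ being trivial since then $\Omega_1=\{1_{\mathbf{B}^1}\}$). Then some coordinate $j$ and some index $K$ satisfy $h(\alpha^k)_j=0$ for every $k\geq K$. Because $\alpha$ is progressive, each coordinate $i\in\{1,...,n\}$ equals $1$ at infinitely many indices, so for every $i$ I can pick an index $k_i\geq K$ with $\alpha^{k_i}_i=1$. The family $\alpha^{k_1},...,\alpha^{k_n}$ then satisfies $\alpha^{k_1}\cup...\cup\alpha^{k_n}=(1,...,1)$, since coordinate $i$ is covered by $\alpha^{k_i}$. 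Applying iii) with $k=n\geq2$ yields $h(\alpha^{k_1})\cup...\cup h(\alpha^{k_n})=(1,...,1)$; but each $k_i\geq K$ forces $h(\alpha^{k_i})_j=0$, so the $j$-th coordinate of this union is $0$, a contradiction. Hence every coordinate of $\widehat{h}(\alpha)$ is $1$ infinitely often, i.e. $\widehat{h}(\alpha)\in\Pi_n$.

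Part (c) should then be immediate. Writing $\rho$ as in (\ref{pre2}) with sequence $\alpha\in\Pi_n$ and times $(t_k)\in Seq$, the function $h(\rho)$, defined by $h(\rho)(t)=h(\rho(t))$, equals $h(\alpha^k)$ at each $t_k$ and equals $h(0,...,0)=(0,...,0)$ off $\{t_k\mid k\in\mathbf{N}\}$ by ii). Thus $h(\rho)(t)=h(\alpha^0)\cdot\chi_{\{t_0\}}(t)\oplus...\oplus h(\alpha^k)\cdot\chi_{\{t_k\}}(t)\oplus...$ is again of the form (\ref{pre2}), with the same times $(t_k)$ and with the sequence $\widehat{h}(\alpha)$, which lies in $\Pi_n$ by (b); therefore $h(\rho)\in P_n$.

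The group axioms and (c) are essentially bookkeeping, so I expect the only genuine obstacle to be the idea in (b): recognizing that progressiveness lets one extract, out of any tail of the sequence $\alpha$, a finite subfamily whose coordinatewise union is $(1,...,1)$, which is exactly the hypothesis needed to invoke iii). I would also note, as an aside, that the same circle of ideas shows every $h\in\Omega_n$ to be a coordinate permutation — using that iii) preserves the number of $1$'s of a vector together with the membership criterion $\mu_j=1\Leftrightarrow\mu\cup f^j=(1,...,1)$, where $f^j$ has $0$ in position $j$ and $1$ elsewhere — which would give an alternative, more structural route to (b) and (c); but the direct covering argument above is shorter and self-contained.
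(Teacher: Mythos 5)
Your proposal is correct, and parts a) and c) follow the paper's route (the paper simply declares a) ``obvious'', whereas you verify closure and inversion explicitly, including the useful substitution $\mu^{j}=h^{-1}(\nu^{j})$ in iii)). The interesting divergence is in b). The paper first observes that, since $\mathbf{B}^{n}$ is finite, only finitely many values $\mu^{1},...,\mu^{p}$ occur infinitely often in $\alpha$, proves the characterization $\alpha\in\Pi_{n}\Longleftrightarrow\mu^{1}\cup...\cup\mu^{p}=(1,...,1)$ (with the degenerate case $\mu^{1}=(1,...,1)$ when $p=1$, handled via ii)), and then transports this characterization through $h$ using iii); injectivity of $h$ guarantees that the recurrent values of $\widehat{h}(\alpha)$ are exactly $h(\mu^{1}),...,h(\mu^{p})$. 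You instead argue by contradiction: if some coordinate $j$ of $\widehat{h}(\alpha)$ dies out after index $K$, progressiveness lets you pick, inside the tail $k\geq K$, indices $k_{1},...,k_{n}$ with $\alpha^{k_{1}}\cup...\cup\alpha^{k_{n}}=(1,...,1)$, and iii) (applied with $k=n\geq2$, the tuple not required to consist of distinct vectors) forces the $j$-th coordinate of $h(\alpha^{k_{1}})\cup...\cup h(\alpha^{k_{n}})$ to be $1$, a contradiction; your separate disposal of $n=1$ via $\Omega_{1}=\{1_{\mathbf{B}^{1}}\}$ is needed because iii) only speaks of $k\geq2$, and it is correct. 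The paper's argument buys a clean standalone description of progressiveness in terms of recurrent values (reusable elsewhere), at the price of a $p=1$ versus $p\geq2$ case split and an appeal to ii); your argument is shorter, uses only iii), and isolates the one real idea --- every tail of a progressive sequence contains a finite subfamily covering $(1,...,1)$. Your closing aside that every $h\in\Omega_{n}$ is a coordinate permutation is also true (iii) with $k=2$ shows $h$ preserves the order of the Boolean lattice in both directions), though the ``number of $1$'s'' justification you sketch would need that lattice argument to be made precise; since you do not rely on it, this does not affect the proof.
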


\begin{proof}
a) The fact that $1_{\mathbf{B}^{n}}\in\Omega_{n},$ $\forall h\in\Omega
_{n},\forall h^{\prime}\in\Omega_{n},h\circ h^{\prime}\in\Omega_{n}$ and
$\forall h\in\Omega_{n},h^{-1}\in\Omega_{n}$ is obvious.

b) Let $h\in\Omega_{n}$ and $\alpha=\alpha^{0},...,\alpha^{k},...\in
\mathbf{B}^{n}$ be arbitrary. We denote for $p\geq1$%
\[
\{\mu^{1},...,\mu^{p}\}=\{\mu|\mu\in\mathbf{B}^{n},\{k|k\in\mathbf{N}%
,\alpha^{k}=\mu\}\text{ is infinite}\}
\]
and we remark that%
\[
\alpha\in\Pi_{n}\Longleftrightarrow\mu^{1},...,\mu^{p},\mu^{1},...,\mu^{p}%
,\mu^{1},...\in\Pi_{n}\Longleftrightarrow
\]%
\[
\Longleftrightarrow\left\{
\begin{array}
[c]{c}%
\mu^{1}=(1,...,1),p=1\\
\mu^{1}\cup...\cup\mu^{p}=(1,...,1),p\geq2
\end{array}
\right.  ,
\]%
\[
\widehat{h}(\alpha)\in\Pi_{n}\Longleftrightarrow h(\mu^{1}),...,h(\mu
^{p}),h(\mu^{1}),...,h(\mu^{p}),h(\mu^{1}),...\in\Pi_{n}\Longleftrightarrow
\]%
\[
\Longleftrightarrow\left\{
\begin{array}
[c]{c}%
h(\mu^{1})=(1,...,1),p=1\\
h(\mu^{1})\cup...\cup h(\mu^{p})=(1,...,1),p\geq2
\end{array}
\right.  .
\]

Case $p=1,$%
\[
\alpha\in\Pi_{n}\Longrightarrow\mu^{1}=(1,...,1)\Longrightarrow h(\mu
^{1})=(1,...,1)\Longrightarrow\widehat{h}(\alpha)\in\Pi_{n}.
\]

Case $p\geq2,$%
\[
\alpha\in\Pi_{n}\Longrightarrow\mu^{1}\cup...\cup\mu^{p}%
=(1,...,1)\Longrightarrow h(\mu^{1})\cup...\cup h(\mu^{p}%
)=(1,...,1)\Longrightarrow
\]%
\[
\Longrightarrow\widehat{h}(\alpha)\in\Pi_{n}.
\]

c) Let us take arbitrarily some $h\in\Omega_{n}$ and a function $\rho\in
P_{n},$%
\[
\rho(t)=\alpha^{0}\cdot\chi_{\{t_{0}\}}(t)\oplus...\oplus\alpha^{k}\cdot
\chi_{\{t_{k}\}}(t)\oplus...
\]
where $\alpha\in\Pi_{n}$ and $(t_{k})\in Seq.$ We have%
\[
h(\rho)(t)=h(\rho(t))=
\]%
\[
=h((0,...,0)\cdot\chi_{(-\infty,t_{0})}(t)\oplus\alpha^{0}\cdot\chi
_{\{t_{0}\}}(t)\oplus(0,...,0)\cdot\chi_{(t_{0},t_{1})}(t)\oplus...
\]%
\[
...\oplus\alpha^{k}\cdot\chi_{\{t_{k}\}}(t)\oplus(0,...,0)\cdot\chi
_{(t_{k},t_{k+1})}(t)\oplus...)
\]%
\[
=h(0,...,0)\cdot\chi_{(-\infty,t_{0})}(t)\oplus h(\alpha^{0})\cdot
\chi_{\{t_{0}\}}(t)\oplus h(0,...,0)\cdot\chi_{(t_{0},t_{1})}(t)\oplus...
\]%
\[
...\oplus h(\alpha^{k})\cdot\chi_{\{t_{k}\}}(t)\oplus h(0,...,0)\cdot
\chi_{(t_{k},t_{k+1})}(t)\oplus...
\]%
\[
=h(\alpha^{0})\cdot\chi_{\{t_{0}\}}(t)\oplus...\oplus h(\alpha^{k})\cdot
\chi_{\{t_{k}\}}(t)\oplus...
\]
Because $\widehat{h}(\alpha)\in\Pi_{n},$ taking into account b), we conclude
that $h(\rho)\in P_{n}.$
\end{proof}

\begin{theorem}
\label{The60}Let be the generator functions $\Phi,\Psi:\mathbf{B}%
^{n}\rightarrow\mathbf{B}^{n}$ of the systems $\Xi_{\Phi},\Xi_{\Psi}$ and the
bijections $h:\mathbf{B}^{n}\rightarrow\mathbf{B}^{n},h^{\prime}\in\Omega
_{n}.$ The following statements are equivalent:

a) $\forall\nu\in\mathbf{B}^{n},$ the diagrams%
\[%
\begin{array}
[c]{ccc}%
\mathbf{B}^{n} & \overset{\Phi^{\nu}}{\rightarrow} & \mathbf{B}^{n}\\
h\downarrow\; &  & \;\downarrow h\\
\mathbf{B}^{n} & \overset{\Psi^{h^{\prime}(\nu)}}{\rightarrow} &
\mathbf{B}^{n}%
\end{array}
\]
are commutative;

b) $\forall\mu\in\mathbf{B}^{n},\forall\alpha\in\Pi_{n},\forall k\in
\mathbf{N}_{\_},$%
\[
h(\widehat{\Phi}^{\alpha}(k,\mu))=\widehat{\Psi}^{\widehat{h^{\prime}}%
(\alpha)}(k,h(\mu));
\]

c) $\forall\mu\in\mathbf{B}^{n},\forall\rho\in P_{n},\forall t\in\mathbf{R},$%
\begin{equation}
h(\Phi^{\rho}(t,\mu))=\Psi^{h^{\prime}(\rho)}(t,h(\mu)). \label{eds1}%
\end{equation}

\end{theorem}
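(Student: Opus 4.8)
The plan is to establish the cycle of implications (a)$\Longrightarrow$(b)$\Longrightarrow$(c)$\Longrightarrow$(a), which is the most economical route because each arrow either unpacks or specializes the previous statement. The conceptual heart is that the discrete-time iterates $\widehat{\Phi}^{\alpha}(k,\cdot)$ are exactly the values carried by the continuous-time orbit $\Phi^{\rho}(\cdot,\mu)$ on the successive intervals $[t_{k},t_{k+1})$, a dictionary already recorded in the remark relating continuous and discrete time. In this reading, (b) is the purely combinatorial shadow of (c), while (a) is the one-step germ of (b), so the three statements should transfer into one another by an induction, an interval-by-interval expansion, and a specialization.

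First I would prove (a)$\Longrightarrow$(b) by induction on $k\in\mathbf{N}_{\_}$. The base case $k=-1$ is immediate, since both sides reduce to $h(\mu)$. For the inductive step I would write, from the iterative definition, $\widehat{\Phi}^{\alpha}(k+1,\mu)=\Phi^{\alpha^{k+1}}(\widehat{\Phi}^{\alpha}(k,\mu))$, apply the commutativity of the diagram in (a) with $\nu=\alpha^{k+1}$ to pull $h$ across $\Phi^{\alpha^{k+1}}$, and then substitute the induction hypothesis. One then recognizes $\Psi^{h^{\prime}(\alpha^{k+1})}(\widehat{\Psi}^{\widehat{h^{\prime}}(\alpha)}(k,h(\mu)))=\widehat{\Psi}^{\widehat{h^{\prime}}(\alpha)}(k+1,h(\mu))$, using that the letter of index $k+1$ of $\widehat{h^{\prime}}(\alpha)$ is precisely $h^{\prime}(\alpha^{k+1})$. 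This closes the step.

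Next, for (b)$\Longrightarrow$(c) I would fix $\rho\in P_{n}$ together with its representation by $\alpha\in\Pi_{n}$ and $(t_{k})\in Seq$. Expanding $h(\Phi^{\rho}(\cdot,\mu))$ through the interval decomposition of $\Phi^{\rho}$ and distributing $h$ over the $\chi$-pieces (the remark on $h(x)$), each coefficient $h(\widehat{\Phi}^{\alpha}(k,\mu))$ is rewritten by (b) as $\widehat{\Psi}^{\widehat{h^{\prime}}(\alpha)}(k,h(\mu))$. It then remains to recognize the resulting signal as $\Psi^{h^{\prime}(\rho)}(\cdot,h(\mu))$: here I invoke Theorem \ref{The59}, which guarantees $\widehat{h^{\prime}}(\alpha)\in\Pi_{n}$ and $h^{\prime}(\rho)\in P_{n}$, and whose proof exhibits $h^{\prime}(\rho)$ as the progressive function with letters $\widehat{h^{\prime}}(\alpha)$ and the \emph{same} switching times $(t_{k})$. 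Feeding this representation into the continuous/discrete dictionary for $\Psi$ produces exactly the right-hand side of (\ref{eds1}).

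Finally, (c)$\Longrightarrow$(a) is obtained by specialization: given $\nu\in\mathbf{B}^{n}$, choose a progressive $\alpha$ with $\alpha^{0}=\nu$ (for instance $\alpha^{k}=(1,...,1)$ for $k\geq1$) and any $(t_{k})\in Seq$, form the corresponding $\rho\in P_{n}$, and evaluate (\ref{eds1}) at some $t\in[t_{0},t_{1})$; there $\Phi^{\rho}(t,\mu)=\Phi^{\nu}(\mu)$ while $\Psi^{h^{\prime}(\rho)}(t,h(\mu))=\Psi^{h^{\prime}(\nu)}(h(\mu))$, which is precisely the commutativity asserted in (a). I expect the main obstacle to be the bookkeeping in the (b)$\Longrightarrow$(c) step, namely justifying that $h^{\prime}(\rho)$ is represented by the sequence $\widehat{h^{\prime}}(\alpha)$ on the original time grid, so that the two interval decompositions align term by term; this is exactly where the hypothesis $h^{\prime}\in\Omega_{n}$ enters, through Theorem \ref{The59}. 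Everything else is a routine induction or a direct substitution.
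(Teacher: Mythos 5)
Your proposal is correct and follows essentially the same route as the paper's proof: the same induction for a)$\Rightarrow$b) (the paper starts the induction at $k=0$ rather than $k=-1$, a cosmetic difference), the same interval-by-interval expansion using Theorem \ref{The59} for b)$\Rightarrow$c), and the same specialization of $\rho$ with first letter $\nu$ evaluated on $[t_{0},t_{1})$ for c)$\Rightarrow$a).
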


\begin{proof}
a)$\Longrightarrow$b) It is sufficient to prove that $\forall\mu\in
\mathbf{B}^{n},\forall\alpha\in\Pi_{n},\forall k\in\mathbf{N},$%
\begin{equation}
h(\Phi^{\alpha^{0}...\alpha^{k}}(\mu))=\Psi^{h^{\prime}(\alpha^{0}%
)...h^{\prime}(\alpha^{k})}(h(\mu)) \label{lem1_}%
\end{equation}
since this is equivalent with b).

We fix arbitrarily some $\mu$ and some $\alpha$ and we use the induction on
$k$. For $k=0$ the statement is proved, thus we suppose that it is true for
$k$ and we prove it for $k+1$:%
\[
h(\Phi^{\alpha^{0}...\alpha^{k}\alpha^{k+1}}(\mu))=h(\Phi^{\alpha^{k+1}}%
(\Phi^{\alpha^{0}...\alpha^{k}}(\mu)))=\Psi^{h^{\prime}(\alpha^{k+1})}%
(h(\Phi^{\alpha^{0}...\alpha^{k}}(\mu)))=
\]%
\[
=\Psi^{h^{\prime}(\alpha^{k+1})}(\Psi^{h^{\prime}(\alpha^{0})...h^{\prime
}(\alpha^{k})}(h(\mu)))=\Psi^{h^{\prime}(\alpha^{0})...h^{\prime}(\alpha
^{k})h^{\prime}(\alpha^{k+1})}(h(\mu)).
\]

b)$\Longrightarrow$c) For arbitrary $\mu\in\mathbf{B}^{n}$ and $\rho\in
P_{n},$%
\[
\rho(t)=\rho(t_{0})\cdot\chi_{\{t_{0}\}}(t)\oplus...\oplus\rho(t_{k})\cdot
\chi_{\{t_{k}\}}(t)\oplus...
\]
$(t_{k})\in Seq,\rho(t_{0}),...,\rho(t_{k}),...\in\Pi_{n}$ we have that%
\begin{equation}
h^{\prime}(\rho)(t)=h^{\prime}(\rho(t))=h^{\prime}(\rho(t_{0}))\cdot
\chi_{\{t_{0}\}}(t)\oplus...\oplus h^{\prime}(\rho(t_{k}))\cdot\chi
_{\{t_{k}\}}(t)\oplus... \label{lem1}%
\end{equation}
is an element of $P_{n}$ (see Theorem \ref{The59} c)) and%
\[
h(\Phi^{\rho}(t,\mu))=h(\mu\cdot\chi_{(-\infty,t_{0})}(t)\oplus\Phi
^{\rho(t_{0})}(\mu)\cdot\chi_{\lbrack t_{0},t_{1})}(t)\oplus...
\]%
\[
...\oplus\Phi^{\rho(t_{0})...\rho(t_{k})}(\mu)\cdot\chi_{\lbrack t_{k}%
,t_{k+1})}(t)\oplus...)=
\]%
\[
=h(\mu)\cdot\chi_{(-\infty,t_{0})}(t)\oplus h(\Phi^{\rho(t_{0})}(\mu
))\cdot\chi_{\lbrack t_{0},t_{1})}(t)\oplus...
\]%
\[
...\oplus h(\Phi^{\rho(t_{0})...\rho(t_{k})}(\mu))\cdot\chi_{\lbrack
t_{k},t_{k+1})}(t)\oplus...=
\]%
\[
\overset{(\ref{lem1_})}{=}h(\mu)\cdot\chi_{(-\infty,t_{0})}(t)\oplus
\Psi^{h^{\prime}(\rho(t_{0}))}(h(\mu))\cdot\chi_{\lbrack t_{0},t_{1}%
)}(t)\oplus...
\]%
\[
...\oplus\Psi^{h^{\prime}(\rho(t_{0}))...h^{\prime}(\rho(t_{k}))}(h(\mu
))\cdot\chi_{\lbrack t_{k},t_{k+1})}(t)\oplus...\overset{(\ref{lem1})}{=}%
\Psi^{h^{\prime}(\rho)}(t,h(\mu)).
\]

c)$\Longrightarrow$a) Let $\nu,\mu\in\mathbf{B}^{n}$ be arbitrary and fixed
and we consider $\rho\in P_{n},$%
\[
\rho(t)=\nu\cdot\chi_{\{t_{0}\}}(t)\oplus\rho(t_{1})\cdot\chi_{\{t_{1}%
\}}(t)\oplus...\oplus\rho(t_{k})\cdot\chi_{\{t_{k}\}}(t)\oplus...
\]
with $(t_{k})\in Seq$ fixed too. We have%
\[
h(\Phi^{\rho}(t,\mu))=h(\mu\cdot\chi_{(-\infty,t_{0})}(t)\oplus\Phi^{\nu}%
(\mu)\cdot\chi_{\lbrack t_{0},t_{1})}(t)\oplus\Phi^{\nu\rho(t_{1})}(\mu
)\cdot\chi_{\lbrack t_{1},t_{2})}(t)\oplus...)=
\]%
\[
=h(\mu)\cdot\chi_{(-\infty,t_{0})}(t)\oplus h(\Phi^{\nu}(\mu))\cdot
\chi_{\lbrack t_{0},t_{1})}(t)\oplus h(\Phi^{\nu\rho(t_{1})}(\mu))\cdot
\chi_{\lbrack t_{1},t_{2})}(t)\oplus...
\]
But%
\[
h^{\prime}(\rho)(t)=h^{\prime}(\rho(t))=h^{\prime}(\nu)\cdot\chi_{\{t_{0}%
\}}(t)\oplus h^{\prime}(\rho(t_{1}))\cdot\chi_{\{t_{1}\}}(t)\oplus...,
\]%
\[
\Psi^{h^{\prime}(\rho)}(t,h(\mu))=
\]%
\[
=h(\mu)\cdot\chi_{(-\infty,t_{0})}(t)\oplus\Psi^{h^{\prime}(\nu)}\cdot
\chi_{\lbrack t_{0},t_{1})}(t)\oplus\Psi^{h^{\prime}(\nu)h^{\prime}(\rho
(t_{1}))}\cdot\chi_{\lbrack t_{1},t_{2})}(t)\oplus...
\]
and from (\ref{eds1}), for $t\in\lbrack t_{0},t_{1}),$ we obtain%
\[
h(\Phi^{\nu}(\mu))=\Psi^{h^{\prime}(\nu)}(h(\mu)).
\]

\end{proof}

\begin{definition}
\label{Def160}We consider the generator functions $\Phi,\Psi:\mathbf{B}%
^{n}\rightarrow\mathbf{B}^{n}$ and the universal asynchronous systems
$\Xi_{\Phi},$ $\Xi_{\Psi}$. If two bijections $h:\mathbf{B}^{n}\rightarrow
\mathbf{B}^{n},h^{\prime}\in\Omega_{n}$ exist such that one of the equivalent
properties a), b), c) from Theorem \ref{The60} is satisfied, then $\Xi_{\Phi
},\Xi_{\Psi}$ are called \textbf{equivalent} (\cite{bib4}, page 35;
\cite{bib6}, page 102; \cite{bib1}, page 40; \cite{bib2}, page 32;
\cite{bib5}, page 6) and $\Phi,\Psi$ are called \textbf{conjugated}. In this
case we denote $\Phi\overset{(h,h^{\prime})}{\rightarrow}\Psi.$
\end{definition}

\begin{definition}
\label{Def161}We fix $\Phi.$ The fact that $\Psi\neq\Phi$ exists such that the
previous property holds, makes us say that $\Phi$ is \textbf{structurally
stable} (Peixoto \cite{bib6}, page 121). $\Psi$ is called an
\textbf{admissible} (or \textbf{allowable})\textbf{\ perturbation of} $\Phi$.
\end{definition}

\begin{remark}
The equivalence of the universal regular autonomous asynchronous systems is
indeed an equivalence and it should be understood as a change of coordinates.
Thus $\Phi$ and $\Psi$ are indistinguishable.
\end{remark}

\begin{example}
$\Phi,\Psi:\mathbf{B}^{2}\rightarrow\mathbf{B}^{2}$ are given by, see Figure
\ref{echiv1}%
%TCIMACRO{\FRAME{ftbpFUX}{3.1981in}{1.4313in}{0pt}{\Qcb{Equivalent systems}%
%}{\Qlb{echiv1}}{}{\special{ language "Scientific Word";
%type "GRAPHIC";  maintain-aspect-ratio TRUE;  display "USEDEF";
%valid_file "F";  width 3.1981in;  height 1.4313in;  depth 0pt;
%original-width 3.1531in;  original-height 1.3958in;  cropleft "0";
%croptop "1";  cropright "1";  cropbottom "0";
%filename 'echiv1.eps';file-properties "XNPEU";}} }%
%BeginExpansion
\begin{figure}
[ptb]
\begin{center}
\fbox{\includegraphics[
height=1.4313in,
width=3.1981in
]%
{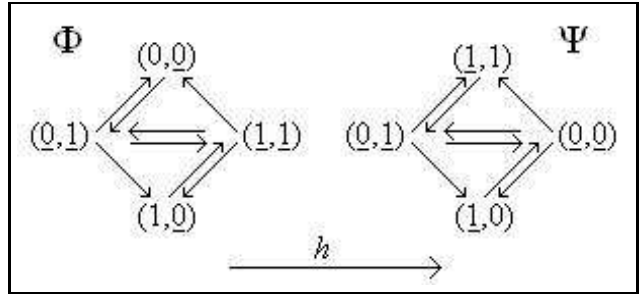}%
}\caption{Equivalent systems}%
\label{echiv1}%
\end{center}
\end{figure}
%EndExpansion%
\[
\forall(\mu_{1},\mu_{2})\in\mathbf{B}^{2},\Phi(\mu_{1},\mu_{2})=(\mu_{1}%
\oplus\mu_{2},\overline{\mu_{2}}),
\]%
\[
\forall(\mu_{1},\mu_{2})\in\mathbf{B}^{2},\Psi(\mu_{1},\mu_{2})=(\overline
{\mu_{1}},\overline{\mu_{1}}\overline{\mu_{2}}\cup\mu_{1}\mu_{2})
\]
and the bijection $h:\mathbf{B}^{2}\rightarrow\mathbf{B}^{2}$ is%
\[
\forall(\mu_{1},\mu_{2})\in\mathbf{B}^{2},h(\mu_{1},\mu_{2})=(\overline
{\mu_{2}},\overline{\mu_{1}}).
\]
The diagram%
\[%
\begin{array}
[c]{ccc}%
\mathbf{B}^{2} & \overset{\Phi^{\nu}}{\rightarrow} & \mathbf{B}^{2}\\
h\downarrow\; &  & \;\downarrow h\\
\mathbf{B}^{2} & \overset{\Psi^{\nu^{\prime}}}{\rightarrow} & \mathbf{B}^{2}%
\end{array}
\]
commutes for $\nu=\nu^{\prime}=(0,0)$ and for $\nu=\nu^{\prime}=(1,1)$ we have
the assignments%
\[%
\begin{array}
[c]{ccc}%
(0,0) & \overset{\Phi}{\rightarrow} & (0,1)\\
h\downarrow\; &  & \;\downarrow h\\
(1,1) & \overset{\Psi}{\rightarrow} & (0,1)
\end{array}
,\;%
\begin{array}
[c]{ccc}%
(0,1) & \overset{\Phi}{\rightarrow} & (1,0)\\
h\downarrow\; &  & \;\downarrow h\\
(0,1) & \overset{\Psi}{\rightarrow} & (1,0)
\end{array}
,\;%
\begin{array}
[c]{ccc}%
(1,0) & \overset{\Phi}{\rightarrow} & (1,1)\\
h\downarrow\; &  & \;\downarrow h\\
(1,0) & \overset{\Psi}{\rightarrow} & (0,0)
\end{array}
,\;%
\begin{array}
[c]{ccc}%
(1,1) & \overset{\Phi}{\rightarrow} & (0,0)\\
h\downarrow\; &  & \;\downarrow h\\
(0,0) & \overset{\Psi}{\rightarrow} & (1,1)
\end{array}
.
\]
We denote $\pi_{i}:\mathbf{B}^{2}\rightarrow\mathbf{B},\forall(\mu_{1},\mu
_{2})\in\mathbf{B}^{2},$%
\[
\pi_{i}(\mu_{1},\mu_{2})=\mu_{i},i=\overline{1,2}.
\]
For $\nu=(0,1),\nu^{\prime}=(1,0)$ we have%
\[%
\begin{array}
[c]{ccc}%
(0,0) & \overset{(\pi_{1},\Phi_{2})}{\rightarrow} & (0,1)\\
h\downarrow\; &  & \;\downarrow h\\
(1,1) & \overset{(\Psi_{1},\pi_{2})}{\rightarrow} & (0,1)
\end{array}
,%
\begin{array}
[c]{ccc}%
(0,1) & \overset{(\pi_{1},\Phi_{2})}{\rightarrow} & (0,0)\\
h\downarrow\; &  & \;\downarrow h\\
(0,1) & \overset{(\Psi_{1},\pi_{2})}{\rightarrow} & (1,1)
\end{array}
,%
\begin{array}
[c]{ccc}%
(1,0) & \overset{(\pi_{1},\Phi_{2})}{\rightarrow} & (1,1)\\
h\downarrow\; &  & \;\downarrow h\\
(1,0) & \overset{(\Psi_{1},\pi_{2})}{\rightarrow} & (0,0)
\end{array}
,%
\begin{array}
[c]{ccc}%
(1,1) & \overset{(\pi_{1},\Phi_{2})}{\rightarrow} & (1,0)\\
h\downarrow\; &  & \;\downarrow h\\
(0,0) & \overset{(\Psi_{1},\pi_{2})}{\rightarrow} & (1,0)
\end{array}
\]
and for $\nu=(1,0),\nu^{\prime}=(0,1)$ the assignments are%
\[%
\begin{array}
[c]{ccc}%
(0,0) & \overset{(\Phi_{1},\pi_{2})}{\rightarrow} & (0,0)\\
h\downarrow\; &  & \;\downarrow h\\
(1,1) & \overset{(\pi_{1},\Psi_{2})}{\rightarrow} & (1,1)
\end{array}
,%
\begin{array}
[c]{ccc}%
(0,1) & \overset{(\Phi_{1},\pi_{2})}{\rightarrow} & (1,1)\\
h\downarrow\; &  & \;\downarrow h\\
(0,1) & \overset{(\pi_{1},\Psi_{2})}{\rightarrow} & (0,0)
\end{array}
,%
\begin{array}
[c]{ccc}%
(1,0) & \overset{(\Phi_{1},\pi_{2})}{\rightarrow} & (1,0)\\
h\downarrow\; &  & \;\downarrow h\\
(1,0) & \overset{(\pi_{1},\Psi_{2})}{\rightarrow} & (1,0)
\end{array}
,%
\begin{array}
[c]{ccc}%
(1,1) & \overset{(\Phi_{1},\pi_{2})}{\rightarrow} & (0,1)\\
h\downarrow\; &  & \;\downarrow h\\
(0,0) & \overset{(\pi_{1},\Psi_{2})}{\rightarrow} & (0,1)
\end{array}
\]
respectively. $\Phi$ and $\Psi$ are conjugated.
\end{example}

\begin{example}
The functions $h,h^{\prime}:\mathbf{B}^{2}\rightarrow\mathbf{B}^{2}$ are given
in the following table%
\[%
\begin{array}
[c]{ccc}%
(\mu_{1},\mu_{2}) & h(\mu_{1},\mu_{2}) & h^{\prime}(\mu_{1},\mu_{2})\\
(0,0) & (0,1) & (0,0)\\
(0,1) & (1,1) & (1,0)\\
(1,0) & (0,0) & (0,1)\\
(1,1) & (1,0) & (1,1)
\end{array}
\]
and the state portraits of the two systems are given in Figure \ref{echiv2}.
$\Xi_{\Phi}$ and $\Xi_{\Psi}$ are equivalent.%
%TCIMACRO{\FRAME{ftbpFUX}{3.1981in}{1.3612in}{0pt}{\Qcb{Equivalent systems}%
%}{\Qlb{echiv2}}{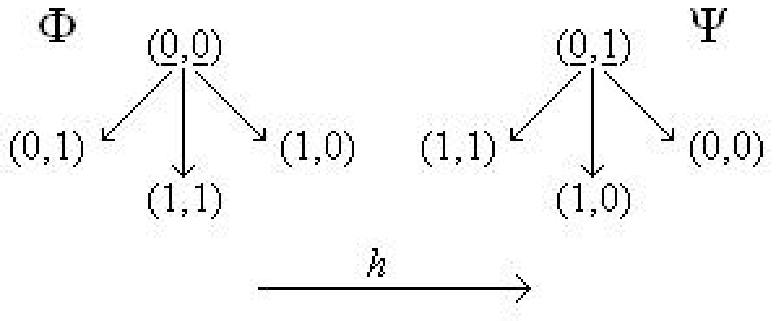}{\special{ language "Scientific Word";
%type "GRAPHIC";  maintain-aspect-ratio TRUE;  display "USEDEF";
%valid_file "F";  width 3.1981in;  height 1.3612in;  depth 0pt;
%original-width 3.1531in;  original-height 1.3258in;  cropleft "0";
%croptop "1";  cropright "1";  cropbottom "0";
%filename 'echiv2.eps';file-properties "XNPEU";}} }%
%BeginExpansion
\begin{figure}
[ptb]
\begin{center}
\fbox{\includegraphics[
height=1.3612in,
width=3.1981in
]%
{echiv2.eps}%
}\caption{Equivalent systems}%
\label{echiv2}%
\end{center}
\end{figure}
%EndExpansion

\end{example}

\begin{theorem}
If $\Phi$ and $\Psi$ are conjugated, then the following possibilities exist:

a) $\Phi=\Psi=1_{\mathbf{B}^{n}};$

b) $\Phi\neq1_{\mathbf{B}^{n}}$ and $\Psi\neq1_{\mathbf{B}^{n}}.$
\end{theorem}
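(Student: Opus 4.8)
The plan is to show that the two listed cases are exhaustive by proving the single equivalence
\[
\Phi=1_{\mathbf{B}^{n}}\Longleftrightarrow\Psi=1_{\mathbf{B}^{n}}.
\]
Once this is established the dichotomy follows immediately: if $\Phi=1_{\mathbf{B}^{n}}$ then $\Psi=1_{\mathbf{B}^{n}}$ and we are in case a), whereas if $\Phi\neq1_{\mathbf{B}^{n}}$ then the equivalence forbids $\Psi=1_{\mathbf{B}^{n}}$, placing us in case b); the missing mixed situations (exactly one of $\Phi,\Psi$ equal to the identity) are precisely what the equivalence excludes. Thus everything reduces to that equivalence, and by symmetry it will suffice to prove one implication.

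First I would unwind the hypothesis. By Definition \ref{Def160}, being conjugated means $\Phi\overset{(h,h^{\prime})}{\rightarrow}\Psi$ for some bijection $h:\mathbf{B}^{n}\rightarrow\mathbf{B}^{n}$ and some $h^{\prime}\in\Omega_{n}$, and I will use formulation a) of Theorem \ref{The60}, namely that $h\circ\Phi^{\nu}=\Psi^{h^{\prime}(\nu)}\circ h$ for every $\nu\in\mathbf{B}^{n}$. The key elementary observation is that $\Phi^{(1,\dots,1)}=\Phi$, directly from the definition of $\Phi^{\nu}$, since with $\nu=(1,\dots,1)$ every coordinate is computed. Assuming $\Phi=1_{\mathbf{B}^{n}}$, I specialize the commuting diagram to $\nu=(1,\dots,1)$ and invoke property ii) of $\Omega_{n}$, which gives $h^{\prime}(1,\dots,1)=(1,\dots,1)$ and hence $\Psi^{h^{\prime}(1,\dots,1)}=\Psi$. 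The diagram then reads $h=\Psi\circ h$, and composing on the right with $h^{-1}$ (available because $h$ is bijective) yields $\Psi=1_{\mathbf{B}^{n}}$.

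It remains to supply the reverse implication, for which I would appeal to the symmetry of the conjugacy relation. Since $\Omega_{n}$ is a group by Theorem \ref{The59} a), we have $(h^{\prime})^{-1}\in\Omega_{n}$, while $h^{-1}$ is again a bijection; substituting $\nu=(h^{\prime})^{-1}(\nu^{\prime})$ in $h\circ\Phi^{\nu}=\Psi^{h^{\prime}(\nu)}\circ h$ and composing with $h^{-1}$ on both sides gives $h^{-1}\circ\Psi^{\nu^{\prime}}=\Phi^{(h^{\prime})^{-1}(\nu^{\prime})}\circ h^{-1}$, that is $\Psi\overset{(h^{-1},(h^{\prime})^{-1})}{\rightarrow}\Phi$. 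Rerunning the argument of the previous paragraph with the roles of $\Phi$ and $\Psi$ interchanged then shows $\Psi=1_{\mathbf{B}^{n}}\Rightarrow\Phi=1_{\mathbf{B}^{n}}$, completing the equivalence. I do not expect a genuine obstacle here; the only point requiring care is to use the specific index $\nu=(1,\dots,1)$ together with $h^{\prime}(1,\dots,1)=(1,\dots,1)$, rather than an arbitrary $\nu$, so that the computed function $\Phi^{\nu}$ coincides with $\Phi$ itself and the identity of $\Phi$ can be transported through $h$ to $\Psi$.
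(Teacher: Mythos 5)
Your proof is correct and rests on the same mechanism as the paper's: specialize the commuting square of Theorem \ref{The60} a) and cancel the bijection $h$. The paper argues in the opposite direction only --- it sets $\Psi=1_{\mathbf{B}^{n}}$, notes that then $\Psi^{h^{\prime}(\nu)}=1_{\mathbf{B}^{n}}$ for every $\nu$, and cancels $h$ to get $\Phi^{\nu}=1_{\mathbf{B}^{n}}$ for all $\nu$, hence $\Phi=1_{\mathbf{B}^{n}}$ --- leaving the converse implication (i.e.\ the exclusion of the mixed case $\Phi=1_{\mathbf{B}^{n}}$, $\Psi\neq1_{\mathbf{B}^{n}}$) to an unstated symmetry of the conjugacy relation. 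You instead prove $\Phi=1_{\mathbf{B}^{n}}\Rightarrow\Psi=1_{\mathbf{B}^{n}}$ directly by specializing to $\nu=(1,\dots,1)$ and invoking $h^{\prime}(1,\dots,1)=(1,\dots,1)$ from property ii) of $\Omega_{n}$, and then you make the symmetry explicit by verifying that $(h^{-1},(h^{\prime})^{-1})$ conjugates $\Psi$ to $\Phi$, using the group property of $\Omega_{n}$ from Theorem \ref{The59} a). This is slightly longer than the paper's argument but actually more complete, since it establishes both implications of $\Phi=1_{\mathbf{B}^{n}}\Longleftrightarrow\Psi=1_{\mathbf{B}^{n}}$ rather than one.
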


\begin{proof}
We presume that $\Phi\overset{(h,h^{\prime})}{\rightarrow}\Psi.$ In the
equation%
\[
\forall\nu\in\mathbf{B}^{n},\forall\mu\in\mathbf{B}^{n},h(\Phi^{\nu}%
(\mu))=\Psi^{h^{\prime}(\nu)}(h(\mu))
\]
we put $\Psi=1_{\mathbf{B}^{n}}$ and we have%
\[
\forall\nu\in\mathbf{B}^{n},\forall\mu\in\mathbf{B}^{n},h(\Phi^{\nu}%
(\mu))=h(\mu)
\]
thus $\forall\nu\in\mathbf{B}^{n},\Phi^{\nu}=1_{\mathbf{B}^{n}}$ and finally
$\Phi=1_{\mathbf{B}^{n}}.$
\end{proof}

\begin{theorem}
We suppose that $\Xi_{\Phi}$ and $\Xi_{\Psi}$ are equivalent and let be
$h,h^{\prime}$ such that $\Phi\overset{(h,h^{\prime})}{\rightarrow}\Psi$.

a) If $\mu$ is a fixed point of $\Phi,$ then $h(\mu)$ is a fixed point of
$\Psi.$

b) For any $\mu\in\mathbf{B}^{n}$ and any $\rho\in P_{n},$ if $\Phi^{\rho
}(t,\mu)$ is periodical with the period $T_{0}$, then $\Psi^{h^{\prime}(\rho
)}(t,h(\mu))$ is periodical with the period $T_{0}$.

c) If $\Xi_{\Phi}$ is transitive, then $\Xi_{\Psi}$ is transitive.
\end{theorem}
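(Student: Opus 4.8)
The plan is to reduce all three parts to the conjugacy identity (\ref{eds1}) of Theorem \ref{The60} c), namely $h(\Phi^{\rho}(t,\mu))=\Psi^{h^{\prime}(\rho)}(t,h(\mu))$, combined with the characterizations already established. Throughout I would use that $h$ is a bijection of the finite set $\mathbf{B}^{n}$ and that, by Theorem \ref{The59} c), $h^{\prime}$ carries $P_{n}$ into $P_{n}$; moreover, since $\Omega_{n}$ is a group (Theorem \ref{The59} a)), $(h^{\prime})^{-1}\in\Omega_{n}$, so $\rho\mapsto h^{\prime}(\rho)$ is in fact a bijection of $P_{n}$ onto itself.

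For part a) I would start from $\Phi(\mu)=\mu$ and invoke the equivalence (\ref{equ1})$\Longleftrightarrow$(\ref{equ2_}) of Theorem \ref{The38}, giving $\Phi^{\rho}(t,\mu)=\mu$ for every $\rho$ and $t$. Fixing any $\rho\in P_{n}$ and applying (\ref{eds1}) yields $\Psi^{h^{\prime}(\rho)}(t,h(\mu))=h(\Phi^{\rho}(t,\mu))=h(\mu)$ for all $t$. Since $h^{\prime}(\rho)\in P_{n}$, this is exactly property (\ref{equ2}) of Theorem \ref{The38} for $\Psi$ at the point $h(\mu)$, and the implication (\ref{equ2})$\Longrightarrow$(\ref{equ1}) delivers $\Psi(h(\mu))=h(\mu)$.

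For part b) I would again substitute (\ref{eds1}), so that $\Psi^{h^{\prime}(\rho)}(t,h(\mu))=h(\Phi^{\rho}(t,\mu))$. The shift condition b) of pseudo-periodicity transports immediately: for $t\geq t^{\prime}$ one gets $\Psi^{h^{\prime}(\rho)}(t+T_{0},h(\mu))=h(\Phi^{\rho}(t+T_{0},\mu))=h(\Phi^{\rho}(t,\mu))=\Psi^{h^{\prime}(\rho)}(t,h(\mu))$. The only genuine point is condition a), the non-existence of $\underset{t\rightarrow\infty}{\lim}$; here I would argue by contradiction. If $\underset{t\rightarrow\infty}{\lim}\Psi^{h^{\prime}(\rho)}(t,h(\mu))$ existed, then, the range being the discrete finite set $\mathbf{B}^{n}$, the orbit would be eventually constant equal to some $\nu$; applying the bijection $h^{-1}$ pointwise would force $\Phi^{\rho}(t,\mu)=h^{-1}(\nu)$ for large $t$, so $\underset{t\rightarrow\infty}{\lim}\Phi^{\rho}(t,\mu)$ would exist, contradicting the pseudo-periodicity of $\Phi^{\rho}(\cdot,\mu)$.

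For part c) I would treat both of the non-equivalent transitivity properties of Definition \ref{Def108}. For (\ref{tt1}): given $\mu,\mu^{\prime}\in\mathbf{B}^{n}$, set the preimages $h^{-1}(\mu),h^{-1}(\mu^{\prime})$, obtain from the transitivity of $\Phi$ some $\rho$ and $t$ with $\Phi^{\rho}(t,h^{-1}(\mu))=h^{-1}(\mu^{\prime})$, and push through (\ref{eds1}) to get $\Psi^{h^{\prime}(\rho)}(t,\mu)=\mu^{\prime}$ with $h^{\prime}(\rho)\in P_{n}$. For (\ref{tt2}) the quantifier over $\rho$ is universal, so given an arbitrary $\sigma\in P_{n}$ I would use the surjectivity noted above to write $\sigma=h^{\prime}(\rho)$ with $\rho=(h^{\prime})^{-1}(\sigma)\in P_{n}$, apply (\ref{tt2}) for $\Phi$ to this $\rho$, and conclude as before. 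The main obstacle, such as it is, lies precisely in part c) with property (\ref{tt2}): the universal quantifier forces one to realize every $\sigma\in P_{n}$ as $h^{\prime}(\rho)$, which is why the fact that $h^{\prime}$ induces a bijection of $P_{n}$ (via $(h^{\prime})^{-1}\in\Omega_{n}$ and Theorem \ref{The59}) is essential rather than cosmetic; the remaining steps are direct transcriptions through the conjugacy identity.
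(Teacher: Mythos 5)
Your proof is correct and follows the same overall strategy as the paper: everything is pushed through the conjugacy identity (\ref{eds1}). The differences are worth noting. In part a) the paper argues more directly, chasing the single diagram for $\nu=(1,\dots,1)$ and using $h^{\prime}(1,\dots,1)=(1,\dots,1)$ (property ii) of $\Omega_{n}$) to get $h(\Phi(\mu))=\Psi(h(\mu))$ in one line; your detour through Theorem \ref{The38} ((\ref{equ1})$\Rightarrow$(\ref{equ2_}) and then (\ref{equ2})$\Rightarrow$(\ref{equ1}) for $\Psi$) is valid but heavier. In part b) you are actually more complete than the paper: the definition of (pseudo)periodicity has two clauses, and the paper only transports the shift condition $\Phi^{\rho}(t,\mu)=\Phi^{\rho}(t+T_{0},\mu)$, silently omitting the requirement that $\lim_{t\rightarrow\infty}\Psi^{h^{\prime}(\rho)}(t,h(\mu))$ not exist; your contradiction argument via the bijectivity of $h$ fills that gap. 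In part c) the paper writes out only the (\ref{tt1}) case and dismisses (\ref{tt2}) as ``similar''; you correctly identify that the (\ref{tt2}) case is not a verbatim repetition, since the universal quantifier over $\rho$ forces one to realize an arbitrary $\sigma\in P_{n}$ as $h^{\prime}(\rho)$, which needs $(h^{\prime})^{-1}\in\Omega_{n}$ and Theorem \ref{The59} c). So your write-up is, if anything, more careful than the paper's at the two points where the paper elides details.
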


\begin{proof}
a) The commutativity of the diagram%
\[%
\begin{array}
[c]{ccc}%
\mathbf{B}^{n} & \overset{\Phi^{\nu}}{\rightarrow} & \mathbf{B}^{n}\\
h\downarrow\; &  & \;\downarrow h\\
\mathbf{B}^{n} & \overset{\Psi^{h^{\prime}(\nu)}}{\rightarrow} &
\mathbf{B}^{n}%
\end{array}
\]
for $\nu=(1,...,1)$ gives%
\[
h(\mu)=h(\Phi(\mu))=h(\Phi^{(1,...,1)}(\mu))=\Psi^{h^{\prime}(1,...,1)}%
(h(\mu))=
\]%
\[
=\Psi^{(1,...,1)}(h(\mu))=\Psi(h(\mu)).
\]

b) The hypothesis states that $\exists t^{\prime}\in\mathbf{R},\forall t\geq
t^{\prime},$%
\[
\Phi^{\rho}(t,\mu)=\Phi^{\rho}(t+T_{0},\mu)
\]
and in this situation%
\[
\Psi^{h^{\prime}(\rho)}(t,h(\mu))=h(\Phi^{\rho}(t,\mu))=h(\Phi^{\rho}%
(t+T_{0},\mu))=\Psi^{h^{\prime}(\rho)}(t+T_{0},h(\mu)).
\]

c) Let $\mu,\mu^{\prime}\in\mathbf{B}^{n}$ be arbitrary and fixed. The
hypothesis (\ref{tt1}) states that%
\[
\exists\rho\in P_{n},\exists t\in\mathbf{R},\Phi^{\rho}(t,h^{-1}(\mu
))=h^{-1}(\mu^{\prime}),
\]
wherefrom%
\[
\Psi^{h^{\prime}(\rho)}(t,\mu)=\Psi^{h^{\prime}(\rho)}(t,h(h^{-1}%
(\mu)))=h(\Phi^{\rho}(t,h^{-1}(\mu))=h(h^{-1}(\mu^{\prime}))=\mu^{\prime}.
\]
The situation with (\ref{tt2}) is similar.
\end{proof}

\section{Dynamic bifurcations}

\begin{remark}
Let be the generator function $\Phi:\mathbf{B}^{n}\times\mathbf{B}%
^{m}\rightarrow\mathbf{B}^{n},$ $\mathbf{B}^{n}\times\mathbf{B}^{m}\ni
(\mu,\lambda)\rightarrow\Phi(\mu,\lambda)\in\mathbf{B}^{n}$ that depends on
the parameter $\lambda\in\mathbf{B}^{m}$. Intuitively speaking (Ott,
\cite{bib3}, page 137) a dynamic bifurcation is a qualitative change in the
dynamic of the system $\Xi_{\Phi(\cdot,\lambda)}$ that occurs at the variation
of the parameter $\lambda$.
\end{remark}

\begin{definition}
If for any parameters $\lambda,\lambda^{\prime}\in\mathbf{B}^{m}$ the systems
$\Xi_{\Phi(\cdot,\lambda)}$ and $\Xi_{\Phi(\cdot,\lambda^{\prime})}$ are
equivalent, then $\Phi$ is called \textbf{structurally stable} (\cite{bib6},
page 117; \cite{bib2}, page 43; \cite{bib5}, page 9); the existence of
$\lambda,\lambda^{\prime}$ such that $\Xi_{\Phi(\cdot,\lambda)}$ and
$\Xi_{\Phi(\cdot,\lambda^{\prime})}$ are not equivalent is called a
\textbf{dynamic} \textbf{bifurcation} (\cite{bib1}, page 57; \cite{bib5}, page 9).

Equivalently, let us fix an arbitrary $\lambda\in\mathbf{B}^{m}.$ If
$\forall\lambda^{\prime}\in\mathbf{B}^{m}$, $\Phi(\cdot,\lambda^{\prime})$ is
an admissible perturbation of $\Phi(\cdot,\lambda)$ (Definition \ref{Def161}),
then $\Phi$ is said to be \textbf{structurally stable}, otherwise we say that
$\Phi$ has a \textbf{dynamic bifurcation}.
\end{definition}

\begin{remark}
If $\forall\lambda\in\mathbf{B}^{m},\forall\lambda^{\prime}\in\mathbf{B}^{m}$
the bijections $h:\mathbf{B}^{n}\rightarrow\mathbf{B}^{n},h^{\prime}\in
\Omega_{n}$ exist such that $\forall\nu\in\mathbf{B}^{n},$ the diagram%
\[%
\begin{array}
[c]{ccc}%
\;\;\mathbf{B}^{n} & \overset{\Phi^{\nu}(\cdot,\lambda)}{\rightarrow} &
\mathbf{B}^{n}\;\\
h\downarrow &  & \downarrow h\\
\;\;\mathbf{B}^{n} & \overset{\Phi^{h^{\prime}(\nu)}(\cdot,\lambda^{\prime}%
)}{\rightarrow} & \mathbf{B}^{n}\;
\end{array}
\]
commutes, then $\Phi$ is structurally stable, otherwise we have a dynamic bifurcation.
\end{remark}

\begin{example}
In Figure \ref{bifurcation3} ($n=2,m=1$),
%TCIMACRO{\FRAME{ftbpFUX}{3.0588in}{1.1537in}{0pt}{\Qcb{Structural stability}%
%}{\Qlb{bifurcation3}}{}{\special{ language "Scientific Word";
%type "GRAPHIC";  maintain-aspect-ratio TRUE;  display "USEDEF";
%valid_file "F";  width 3.0588in;  height 1.1537in;  depth 0pt;
%original-width 4.0101in;  original-height 1.4883in;  cropleft "0";
%croptop "1";  cropright "1";  cropbottom "0";
%filename 'bifurcation3.eps';file-properties "XNPEU";}} }%
%BeginExpansion
\begin{figure}
[ptb]
\begin{center}
\fbox{\includegraphics[
height=1.1537in,
width=3.0588in
]%
{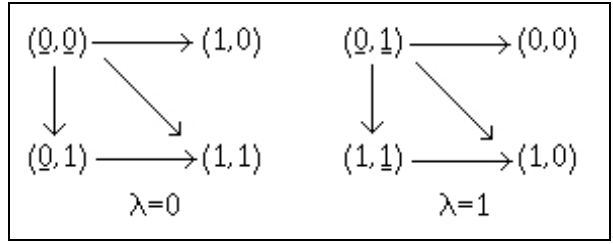}%
}\caption{Structural stability}%
\label{bifurcation3}%
\end{center}
\end{figure}
%EndExpansion
$\Phi$ is structurally stable and the bijections $h,h^{\prime}$ are defined
accordingly to the following table:%
\[%
\begin{array}
[c]{ccc}%
(\mu_{1},\mu_{2}) & h(\mu_{1},\mu_{2}) & h^{\prime}(\mu_{1},\mu_{2})\\
(0,0) & (0,1) & (0,0)\\
(0,1) & (1,1) & (1,0)\\
(1,0) & (0,0) & (0,1)\\
(1,1) & (1,0) & (1,1)
\end{array}
\]

\end{example}

\begin{example}
In Figure \ref{bifurcation4} ($n=2,m=1$),%
%TCIMACRO{\FRAME{ftbpFUX}{2.9888in}{1.1078in}{0pt}{\Qcb{Dynamic bifurcation}%
%}{\Qlb{bifurcation4}}{}{\special{ language "Scientific Word";
%type "GRAPHIC";  maintain-aspect-ratio TRUE;  display "USEDEF";
%valid_file "F";  width 2.9888in;  height 1.1078in;  depth 0pt;
%original-width 3.9271in;  original-height 1.4313in;  cropleft "0";
%croptop "1";  cropright "1";  cropbottom "0";
%filename 'bifurcation4.eps';file-properties "XNPEU";}} }%
%BeginExpansion
\begin{figure}
[ptb]
\begin{center}
\fbox{\includegraphics[
height=1.1078in,
width=2.9888in
]%
{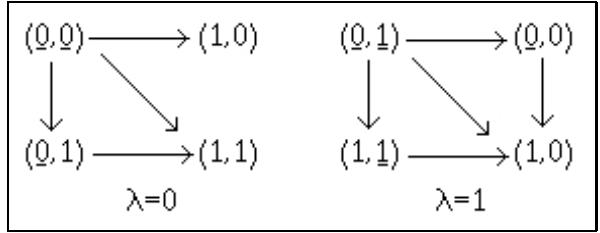}%
}\caption{Dynamic bifurcation}%
\label{bifurcation4}%
\end{center}
\end{figure}
%EndExpansion
$\Phi$ has a dynamic bifurcation.
\end{example}

\begin{definition}
The \textbf{bifurcation diagram} (\cite{bib1}, page 61) is a partition of the
set of systems $\{\Xi_{\Phi(\cdot,\lambda)}|\lambda\in\mathbf{B}^{m}\}$ in
classes of equivalence given by the equivalence of the systems, together with
representative state portraits for each class of equivalence.
\end{definition}

\begin{example}
Figure \ref{bifurcation4} is a bifurcation diagram.
\end{example}

\begin{definition}
The \textbf{bifurcation diagram} (\cite{bib3}, page 5) is the graph that gives
the position of the fixed points depending on a parameter, such that a
bifurcation exists.
\end{definition}

\begin{remark}
Such a(n informal) definition works for calling Figure \ref{bifurcation4} a
bifurcation diagram, since there fixed points exist. However for Figure
\ref{bifurcation5}
%TCIMACRO{\FRAME{ftbpFUX}{2.9862in}{1.1087in}{0pt}{\Qcb{Dynamic bifurcation}%
%}{\Qlb{bifurcation5}}{}{\special{ language "Scientific Word";
%type "GRAPHIC";  maintain-aspect-ratio TRUE;  display "USEDEF";
%valid_file "F";  width 2.9862in;  height 1.1087in;  depth 0pt;
%original-width 2.943in;  original-height 1.0758in;  cropleft "0";
%croptop "1";  cropright "1";  cropbottom "0";
%filename 'bifurcation5.eps';file-properties "XNPEU";}} }%
%BeginExpansion
\begin{figure}
[ptb]
\begin{center}
\fbox{\includegraphics[
height=1.1087in,
width=2.9862in
]%
{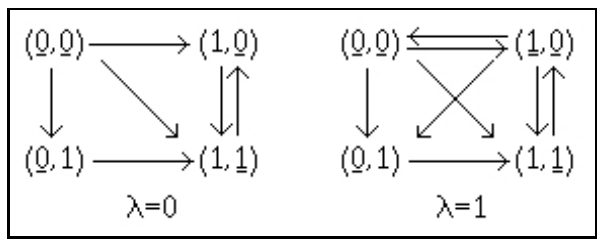}%
}\caption{Dynamic bifurcation}%
\label{bifurcation5}%
\end{center}
\end{figure}
%EndExpansion
this definition does not work, because a bifurcation exists there, but no
fixed points.
\end{remark}

\begin{definition}
Let be $\Phi,\Psi:\mathbf{B}^{n}\times\mathbf{B}^{m}\rightarrow\mathbf{B}^{n}%
$. The families of systems $(\Xi_{\Phi(\cdot,\lambda)})_{\lambda\in
\mathbf{B}^{m}}$ and $(\Xi_{\Psi(\cdot,\lambda)})_{\lambda\in\mathbf{B}^{m}}$
are called \textbf{equivalent} (\cite{bib5}, pages 7, 17) if there exists a
bijection $h^{\prime\prime}:\mathbf{B}^{m}\rightarrow\mathbf{B}^{m}$ such that
$\forall\lambda\in\mathbf{B}^{m},\Xi_{\Phi(\cdot,\lambda)}$ and $\Xi
_{\Psi(\cdot,h^{\prime\prime}(\lambda))}$ are equivalent in the sense of
Definition \ref{Def160}.
\end{definition}

\end{document}